\documentclass[conference]{IEEEtran}

\usepackage[T1]{fontenc}
\usepackage{hhline}
\usepackage{amssymb,amsfonts,amsthm}
\usepackage{amsmath}
\usepackage{epsfig}
\usepackage{color}
\usepackage{enumerate}
\usepackage{ifpdf}
\usepackage{graphicx}
\usepackage{subfig}
\newtheorem{theorem}{Theorem}

\theoremstyle{remark}

\theoremstyle{definition}

\newtheorem{proposition}{Proposition}

\begin{document}

\title{Design of Virtualized Network Coding Functionality for Reliability Control of Communication Services over Satellite}

\author{Tan Do-Duy, M. A. V\'azquez-Castro\\
Dept. of Telecommunications and Systems Engineering\\
Autonomous University of Barcelona, Spain\\
Email: \{tan.doduy, angeles.vazquez\}@uab.es}

\maketitle
\begin{abstract}
Network coding (NC) is a novel coding technology that can be seen as a generalization of classic point-to-point coding. As with classic coding, both information theoretical and algebraic views bring different and complementary insights of NC benefits and corresponding tradeoffs. However, the multi-user nature of NC and its inherent applicability across all layers of the protocol stack, call for novel design approaches towards efficient practical implementation of this technology.

In this paper, we present a possible way forward to the design of NC as a virtual network functionality offered to the communication service designer. Specifically, we propose the integration of NC and Network Function Virtualization (NFV) architectural designs. The integration is realized as a toolbox of NC design domains that the service designer can use for flow engineering. Our proposed design framework combines network protocol-driven design and system modular-driven design approaches. 
In particular, the adaptive choice of the network codes and its use for a specific service can then be tailored and optimized depending on the ultimate service intent and underlying (virtualized) system or network.

We work out a complete use case where we design geo-network coding, an application of NC for which coding rate is optimized using databases of geo-location information towards an energy-efficient use of resources. Our numerical results highlight the benefits of both the proposed NC design framework and the specific application.

\end{abstract}

\begin{IEEEkeywords}
Network coding, network function virtualization, satellite communications, reliability, connectivity.
\end{IEEEkeywords}

\IEEEpeerreviewmaketitle

\section{Introduction}
\vspace{-2pt}

Fundamental results show that NC technology enables network capacity achievability both for noiseless \cite{Ahlswede.2000} and noisy networks \cite{Dana.2006}. Such fundamental ground has lead to an enormous body of research which further confirms the great potential of the technology as a multipurpose communication networking tool to control capacity and reliability across different types or networks and systems with applicability at different communication layers. 

Full exploitation of NC potential however clearly requires novel design paradigms as compared to traditional networking protocols. This is due to the fact that differently to traditional networking procedures, NC optimizes the information flow by considering such flow as a non-physical flow which is subject of computational manipulation. NC should also operate in full coordination with other existing and upcoming networking protocols. It becomes self-evident that such new design paradigms should not rely on dedicated and/or specialized physical hardware or network-dependent devices. Such requirements naturally match current trends of virtualization of standard technologies of computation networking and storage. 

As an innovative technique towards the implementation of network functions, network function virtualization (NFV) is clearly a NC design option as it would make NC available as a flow engineering functionality offered to the network. NFV has been proposed as a promising design paradigm by the telecommunications sector to facilitate the design, deployment, and management of networking services (see \cite{Han.2015, Mijumbi.2016, Liang.2015} for details). Essentially, NFV separates software from physical hardware so that a network service can be implemented as a set of virtualized network functions through virtualization techniques and run on commodity standard hardware. Furthermore, NFV can be instantiated on demand at different network locations without requiring the installation of new equipment.
The integration of NC and NFV will enable the applicability of NC in future networks (e.g. upcoming 5G networks \cite{Fallgren.2013}) to both distributed (i.e. each network device) and centralized manners (i.e. servers or service providers). The European Telecommunications Standards Institute (ETSI) has proposed some use cases for NFV in \cite{NFV001.2013}, including the virtualization of cellular base station, mobile core network, home network, and fixed access network, etc. Furthermore, NFV is expected as an innovative technology to promote cost-efficient and new added-value services for satellites \cite{Gardikis.2016}.

There are already available proposals that combine NC virtualization and software-defined networking (SDN) technology. For example, in \cite{Szabo2.2015}, the authors investigate the usability of random linear NC as a function in SDNs to be deployed with virtual (software) OpenFlow switches. The NC functions including encoder, re-encoder, and decoder are run on a virtual machine (ClickOS). This work provides a prototype with implementation of additional network functions via virtual machines (VMs) by sharing system resources or additional hardware (e.g. FPGA cards). In \cite{Hansen.2015}, NC is implemented in a VM which is then embedded into an Open vSwitch. The paper shows the relation between NC (as a software), VM and host OS of Open vSwich. These works indicate the feasibility of integrating NC as a functionality based on SDN and the concentration of network functions in centralized architectures such as data centers or centralized locations proposed by network operators, service integrators and providers. However, a unified design framework for NC design in view of NFV either centralized or distributed is currently missing. In this paper, we focus on the functional level regardless of how it is softwarised.

The motivation of this paper is to show the generalization of NC design domains as a toolbox so that NC can be applied over different operational frameworks and services including satellite or hybrid networks thus enabling softwarization and rapid innovative deployment. Besides, the use of databases with geo-tagged link statistics and network nodes' location may provide useful information for the optimization of network performance.
Our contributions can be summarized as follows.
\begin{itemize}
	\item We propose the integration of NC and NFV as a functional architecture given as a structured toolbox of NC design domains so that tailored designs can be implemented in software and deployed over virtualised infrastructures to allow flow engineering tailored to service operational intent.
	\item We describe the coding design domain with coherent and non-coherent design approaches. In particular, we consider systematic NC as an example of coherent approach. On the other hand, a novel subspace coding scheme, as part of non-coherent approach, is introduced to reveal the analytical convenience of $\mathbb{F}_q$-linear erasure channels while enabling the use of linear algebra for efficient encoding, re-encoding, and decoding.
	\item We validate our proposed NC function (NCF) over a complete design for the use case of geo-network coding. VNCF is integrated as a VNF in ETSI NFV architecture with clear identification of exchanges between VNCF and NFV-Management and Orchestration via reference points. A general procedure for the instantiation, performance monitoring, and termination of VNCF is provided. In particular, coding is optimized according to geo-tagged link statistics and geo-location information given computational complexity/energy efficiency constraints.
	Specifically, numerical results indicate that per-link achievable rate region obtained with NC at the source and re-encoding at the intermediate node is twice wider than that of transmission with NC at the source only. Furthermore, connectivity gains up to $32$ times and $47$ times are obtained if compared with the uncoded case for $80\%$ and $85\%$ target reliability with computational complexity constraint of $125000$ operations in terms of multiplications and additions, respectively.
\end{itemize}

The rest of this paper is organized as follows.
In Section \ref{sec:NCframework}, we propose the architectural design of NC. 
Coding design domain with coherent and non-coherent design approaches is introduced in Section \ref{sec:Fqlinear}.
In Section \ref{sec:NCFV}, we present our proposed integrated design of NC and existing NFV architecture.
In Section \ref{sec:CaseStudy}, we validate our proposed architectural design in a specific virtualized Geo-NC function (VGNCF) design using the functionalities identified in previous sections.
In Section \ref{sec:Numerical}, we conduct numerical results to validate the performance for the case of using our VGNCF design and the uncoded case.
Finally, Section \ref{sec:Conc} identifies conclusions and further work.
\vspace{-6pt}

\section{NC DESIGN FRAMEWORK}
\label{sec:NCframework}
\vspace{-2pt}
Virtualization and NC are two different techniques to address different challenges in the designs of upcoming network technologies. The combination of virtualization and NC brings forth a potential solution for the management and operation of the future networks \cite{Hansen.2015}. NC design involves different domains \cite{MAVazquez2.2015}:

\begin{itemize}
	\item \textbf{NC coding domain} - domain for the design of network codebooks, encoding/decoding algorithms, performance benchmarks, appropriate mathematical-to-logic maps, etc. Note that this is a domain fundamentally related to coding theory.
	\item \textbf{NC functional domain} - domain for the design of the functional properties of NC to match design requirements built upon abstractions of 
\begin{itemize}
	\item \textbf{Network}: by choosing a reference layer in the standardized protocol stacks and logical nodes for NC and re-encoding operations. 
	\item \textbf{System}: by abstracting the underlying physical or functional system at the selected layer e.g. SDN and/or function virtualization. 
\end{itemize}

Note that this is a domain fundamentally related to electrical engineering and computer science. However, note that while traditionally electrical engineering has been concerned with physical networks and computer science with logical networks, virtualization brings both disciplines closer to each other.
\item \textbf{NC protocol domain} - domain for the design of physical signaling/transporting of the information flow across the virtualized physical networks in one way or interactive protocols. 
\end{itemize}

The domain clearly relevant for NC to be designed as a NFV is the NC functional domain and we develop this domain in the next section. 

\vspace{-6pt}
\section{CODING DESIGN DOMAIN} 
\label{sec:Fqlinear}
\vspace{-2pt}
Consider that a source node has $k$ data packets to send towards a sink node. The source is connected to the sink via intermediate node(s). Each packet is a column vector of length $m$ over a finite field $\mathbb{F}_q$. The set of data packets in matrix notation is $S=[s_1 \hspace{0.1cm} s_2 \hspace{0.1cm} ... s_k]$, where $s_t$ is the $t^{th}$ data packet. The links are modeled as memoryless erasure channels with erasure probability $\epsilon_{i}$, $i=1,2,..$.

\subsection{Coherent Network Coding} 
\label{sec:SNC}
As part of coherent NC, systematic NC (SNC) has been considered as a practical NC scheme to increase network reliability of wireless networks. SNC is derived in \cite{Saxena.2016} and introduced here for convenience.

\subsubsection{Encoding at the source node:}
The SNC encoder sends $k$ data packets in the first $k$ time slots (systematic phase) followed by $n-k$ random linear combinations of data packets in the next $n-k$ time slots (non-systematic phase). Let $X=SG$ represents $k$ systematic packets and $n-k$ coded packets transmitted by the SNC encoder during $n$ consecutive time slots. The generator matrix $G = [I_k \hspace{0.1cm} C]$ consists of the identity matrix $I_k$ of dimension $k$ and $C \in \mathbb{F}^{k\times (n-k)}_q$ with elements chosen randomly from a finite field $\mathbb{F}_q$. The code rate is given by $r=k/n$.

\subsubsection{Re-encoding at the intermediate node:}
The intermediate node stores all the received packets in its buffer. The SNC re-encoder performs re-encoding operations in every time slot and sends n packets to the sink node. Let $X_I = XD_1T$ represents $n$ packets transmitted by the SNC re-encoder during $n$ consecutive time slots where $D_1 \in \mathbb{F}_q^{n\times n}$ represents erasures from the source node to the intermediate node and $T \in \mathbb{F}_q^{n\times n}$ represents the re-encoding operations at the intermediate node.

The erasure matrix $D_1$ is a $n\times n$ diagonal matrix with every diagonal component zero with probability $\epsilon_{1}$ and one with probability $1-\epsilon_{1}$. The re-encoding matrix $T$ is modeled as follows. During the systematic phase, if a packet $s_t$ is lost i.e., $D_1(t, t)=0$ then the non-zero elements of the $t_{th}$ column of matrix $T$ are randomly selected from $\mathbb{F}_q$. This represents that if the systematic packet is lost from the source node to the intermediate node, then the intermediate node transmits a random linear combination of the packets stored in its buffer. If a packet $s_t$ is not lost; i.e., $D_1(t, t)=1$ then the $t_{th}$ column of matrix $T$ is the same as the $t_{th}$ column of identity matrix $I_n$. This represents that the intermediate node forwards this systematic packet to the sink. During the non-systematic phase, the intermediate node sends a random linear combination of the packets stored in its buffer and all the non-zero elements of last $n-k$ columns of $T$ are chosen randomly from the finite field $\mathbb{F}_q$.

\subsubsection{Decoding at the sink node:}
Let $Y = X_ID_2$ represent $n$ packets received by the sink node where $D_2$ represents erasures from the intermediate node to the sink node. $D_2$ is $n\times n$ diagonal matrix of the same type as $D_1$ but with erasure probability $\epsilon_{2}$. If the sink does not receive any packet in time slot $t$ then the $t_{th}$ column of $Y$ is a zero column. 

The overall SNC coding strategy can be expressed using a linear operation channel (LOC) model where the output at the sink is $Y = SGH$ where $H = D_1TD_2$ represents the transfer matrix of the network. We assume that the coding vectors are attached in the packet headers so that the matrix $GH$ is known at the sink. The decoding is progressive using Gauss-Jordan elimination algorithm. All the $k$ data packets are recovered when $k$ innovative packets are received at the sink, i.e., $rank(GH)=k$.

Assuming Bernoulli distributed erasures and using the probability mass function $\alpha\left(j,v,p\right)={\binom{v}{j}} \left(1-p\right)^{j}\left(p\right)^{v-j}$, the residual erasure rate after decoding at each single hop is derived in \cite{Saxena.2016} and introduced here for convenience:
\begin{eqnarray}
\eta_{i}\left(r,\epsilon_{i}\right)=\epsilon_{i}\left(\phi_{1}\left(r,\epsilon_{i}\right)+\phi_{2}\left(r,\epsilon_{i}\right)\right), 
\label{eq:E7}
\end{eqnarray}
with 

$\phi_{1}\left(r,\epsilon_{i}\right)=\sum_{l=0}^{k-1}\alpha\left(l,n-1,\epsilon_{i}\right)$,

$\phi_{2}\left(r,\epsilon_{i}\right)=\sum_{l_{1}=0}^{k-1}\alpha\left(l_{1},k,\epsilon_{i}\right) \sum_{l_{2}=k-l_{1}}^{n-k}\alpha\left(l_{2},n-k,\epsilon_{i}\right)\times$
$\left(1-\text{\ensuremath{\prod}}_{l_{3}=0}^{k-l_{1}-1}\left(1-q^{l_{3}-l_{2}}\right)\right)$.

\subsection{Incoherent Network Coding}
In \cite{Koetter.2008, Silva.2010}, subspace coding is proposed for improving the reliability of $\mathbb{F}_q$-linear channels. The transmission is assumed incoherent and so the underlying (linear) NC problem is assumed to be separately solved. Then, subspace coding provides error correction over such q-linear network coded transmission channel. Subspace codes proposed in \cite{Koetter.2008, Silva.2010} follow (classic) distance metric driven constructions and can correct errors and erasures. Specifically, proposed subspace codes are constructed as lifted rank metric codes. In the following, lifted Gabidulin codes are described followed by the subspace coding proposed in this work.

\subsubsection{Lifted rank-metric codes:}
Rank-metric codes are matrix codes. Let $F^{n\times m}_q$ be the set of all $n\times m$ matrices with entries in $\mathbb{F}_q$. The rank distance between two matrices A and B is defined as $d_r(A,B)=rank(A-B)$. A rank-metric code $\mathcal{C}$ is defined as a subset of $F^{n\times m}_q$. This definition considers that matrices are codewords and these codes correspond to the so-called Delsarte codes. Instead, if codewords are considered as vectors with entries in the extension field $F_{q^m}$, then the corresponding rank-metric codes are the so-called Gabidulin codes, which are then a special case of Delsarte codes. The minimum rank distance of a rank-metric code $\mathcal{C}$ is denoted as $d_R(\mathcal{C}$). The Singleton bound for rank metric codes is
\begin{align*}
\left|\mathcal{C} \right|\leq q^{max\left\{n,m\right\}}(min\left\{n,m\right\}-d+1)
\end{align*}

for every code $\mathcal{C} \subset F^{n\times m}_q$ such that $d_R(\mathcal{C}=d$. Codes that achieve this bound are called maximum rank-distance (MRD) codes and linear MRD codes are known to exist for all choices of parameters $q, n, m$ and $d\leq min(n,m)$. It can be shown that Gabidulin codes can be seen as the rank-metric equivalent ($q^m$-ary analog) of the Reed Solomon codes and are MRD codes.

The construction of subspace codes given in \cite{Silva.2010} for constant dimension subspace code, $\mathcal{C}_S$, consists of lifting $\mathcal{C}_\mathcal{G}$ as follows
\begin{align*}
\mathcal{C}_\mathcal{S}=Lift(\mathcal{C}_\mathcal{G})\triangleq \left\{ \left\langle \left[ I_{k\times k} \hspace{0.3cm} X\right]\right\rangle \in F^{k\times k+m}_q, X \in \mathcal{C}_\mathcal{G} \right\}.
\end{align*}

Clearly, $\left|\mathcal{C}_\mathcal{G}\right|=\left|\mathcal{C}_\mathcal{S}\right|$ and it can be shown that lifted MRD codes are asymptotically optimal constant-dimension codes. The disadvantage of lifted rank-metric codes is however that information data must be mapped to the codewords and a basis must be chosen appropriately. Moreover, decoding algorithms are based on the rank metric between codewords and use of linear algebra is convolved. In the following section a more convenient subspace coding scheme is presented with fast encoding and efficient on-the-fly decoding algorithms using simple linear algebra.

\subsubsection{Subspace Coding for $\mathbb{F}_q$-Linear Erasure Satellite Channels:} 
In this work subspace coding is proposed based on random linear coding (RLC).

\begin{theorem}
Subspace encoding the $\mathbb{F}_q$-linear multicast erasure satellite channel. Let $X\in F^{k\times m}_q$ a set of $k$ information packets. A capacity-achieving subspace coding scheme for the $\mathbb{F}_q$-linear multicast satellite channel is given by the linear map
\begin{align*}
F^{k\times m}_q  				\longrightarrow F^{n\times m}_q: \hspace{0.3cm}		X			\longmapsto G_sX
\end{align*}

where $G_s\in F^{n\times k}_q$ is the channel encoding matrix applied at the source. The channel encoding matrix is a lifted random matrix, $G_s=Lift(H)$ where the elements of $H$ are randomly chosen in $\mathbb{F}_q$ and $H\in F^{(n-k)\times m}_q$.
\end{theorem}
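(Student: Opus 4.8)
The plan is to establish the statement in two movements. \emph{First}, I would recall the known capacity of the $\mathbb{F}_q$-linear multicast erasure channel: writing the transmission over $n$ channel uses, each receiver $j$ obtains $Y_j = A_j G_s X$ together with the accumulated coding headers $A_j G_s$, where $A_j\in F^{N_j\times n}_q$ is the transfer matrix induced by the erasures (and by any intermediate re-encoding), and $\rho_j \triangleq \mathrm{rank}(A_j)$ is the number of innovative packets that reach $j$. The cut bound then states that no scheme delivers more than $\rho_j$ independent packet-symbols per block to receiver $j$; since for the erasure channel $\rho_j$ concentrates around $n(1-\epsilon_j)$ (or the corresponding relay-limited value) by the law of large numbers, the multicast capacity is $C=\min_j(1-\epsilon_j^{\mathrm{eff}})$ packets per use. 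This part is essentially a citation of the min-cut characterisation of linear multicast capacity.

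\emph{Second}, I would prove achievability for the proposed construction. The essential point is that $G_s=\mathrm{Lift}(H)$ has the block form $\bigl[\begin{smallmatrix} I_k \\ H \end{smallmatrix}\bigr]$ with a systematic identity part, so the transmitted object is the lifted $k$-dimensional subspace $V=\langle[\,I_k\ X\,]\rangle$, and receiver $j$ recovers $X$ exactly as soon as $\mathrm{rank}(A_j G_s)=k$: indeed it can run the same progressive Gauss--Jordan elimination on $[A_j G_s\mid Y_j]$ that was described for SNC, reading off $X$ once $k$ innovative coding vectors have accumulated. It therefore suffices to show that, whenever the cut condition $\rho_j\ge k$ holds, the random choice of the entries of $H$ makes $\mathrm{rank}(A_j G_s)=k$ with high probability. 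Partitioning $A_j=[\,A_j'\mid A_j''\,]$ conformally with $G_s$ gives $A_j G_s=A_j'+A_j''H$; the determinant of any fixed $k\times k$ submatrix is a polynomial of degree at most $k$ in the entries of $H$, and the cut condition guarantees it is not identically zero (some completion of the surviving packets to a full-rank set of $k$ exists). By the Schwartz--Zippel lemma each such event has probability at most $k/q$, and a union bound over the receivers and over the finitely many admissible transfer patterns yields $\Pr[\,\exists j:\mathrm{rank}(A_j G_s)<k\,]\le c(n,k)/q$ for a combinatorial constant $c(n,k)$.

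Combining the two movements: for any target rate $R<C$, choose $k/n$ strictly between $R$ and $C$, so that the cut condition holds for every receiver with probability approaching one; conditioned on it, the argument above shows decoding succeeds for every receiver with probability $1-O(1/q)$. Letting $q$ (equivalently the packet alphabet) grow drives the overall error probability to zero at rate $R$, so the lifted random code is capacity-achieving. Moreover, since decoding only uses the in-band systematic headers and never a prescribed basis of $V$, the scheme delivers the transmitted subspace non-coherently --- this is the ``subspace coding'' functionality claimed, now obtained from a random-coding (rather than rank-metric packing) argument.

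The step I expect to be the main obstacle is the uniform probabilistic rank argument: one must make the ``bad'' event bound hold simultaneously over all admissible transfer matrices $A_j$ (not merely one fixed erasure pattern), and must verify that the non-vanishing of the relevant determinant polynomial genuinely follows from the cut condition \emph{in the presence of the fixed $I_k$ block} --- i.e.\ that the systematic packets cannot by themselves force a rank deficiency. The remaining subtlety, identifying the achieved rate with the true capacity via concentration of $\rho_j$ around $n(1-\epsilon_j^{\mathrm{eff}})$, is routine but should be stated carefully for the relay-in-the-middle topology used in the paper.
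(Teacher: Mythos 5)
Your argument is essentially correct, but it takes a genuinely different route from the paper. The paper's proof is a two-sentence citation: it asserts that $H$ is the random linear coding scheme already proved capacity-achieving for wireless erasure networks in \cite{Dana.2006}, and then observes only that the lifting adds the algebraic property that any $\mathbb{F}_q$-linear transformation of the encoded data remains row equivalent to the injected information packets, which is what enables non-coherent transmission in the sense of \cite{Koetter.2008}. You instead reconstruct the achievability from first principles: the min-cut converse via $\rho_j=\mathrm{rank}(A_j)$, the conformal decomposition $A_jG_s=A_j'+A_j''H$, the Schwartz--Zippel bound on the vanishing of a $k\times k$ minor, and a union bound over receivers and admissible transfer patterns. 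Your approach buys an actual self-contained proof, and you correctly isolate its one delicate point: because of the fixed $I_k$ block the transmitted subspace is constrained to be a complement of the span of the last $n-k$ coordinates, so before invoking Schwartz--Zippel one must verify that a $k$-dimensional subspace complementary simultaneously to that coordinate subspace and to $\ker A_j$ exists (it does, since both have dimension at most $n-k$, by a greedy basis construction or a counting argument for $q$ not too small). The paper's approach buys brevity and a clean separation between the known capacity result for $H$ and the new observation about lifting. Two points of alignment: the theorem as printed has $H\in F^{(n-k)\times m}_q$, which is dimensionally inconsistent with $G_s\in F^{n\times k}_q$; your block form $G_s=\left[\begin{smallmatrix}I_k\\ H\end{smallmatrix}\right]$ implicitly corrects this to $H\in F^{(n-k)\times k}_q$. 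Also, your asymptotics drive the error to zero by letting $q$ grow at fixed $n$, whereas the cited random-linear-coding result also covers fixed $q$ with growing block length; either regime supports the capacity claim, but you should state which one you are using.
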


\begin{proof}
The matrix $H$ corresponds to the capacity achieving random linear coding (RLC) scheme, which is proved capacity-achieving in \cite{Dana.2006} for wireless erasure networks. In \textit{Proposition} 1 it is shown that it is also a channel coding scheme when routing only achieves capacity. The lifting just introduces an additional algebraic property. Namely, any $\mathbb{F}_q$-linear transformation of the encoded data remains row equivalent to the injected information data packets, hence allowing non-coherent transmission as in \cite{Koetter.2008}. 
\end{proof}

The usefulness of the $\mathbb{F}_q$-linear model becomes now clear as note that the transmission with NC towards every multicast destination can be described by the following composition of linear maps
\begin{align*}
F^{k\times m}_q  		\longrightarrow F^{n\times m}_q 					\longrightarrow	F^{l\times m}_q: \hspace{0.3cm}		X		\longmapsto G_sX				\longmapsto	H_{r_i}G_sX
\end{align*}

with $H_{r_i}\in F^{l\times n}_q$ is the $\mathbb{F}_q$-linear channel transfer to every destination. 

\begin{theorem}
Subspace decoding for the $\mathbb{F}_q$-linear multicast erasure satellite channel. Assuming that $k$ independent coded packets are retrieved at every destination, the decoding consists of progressive Gauss-Jordan elimination. 
\end{theorem}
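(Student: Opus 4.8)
The plan is to make the end-to-end linear structure explicit and then exhibit Gauss--Jordan elimination as the corresponding inverse map. First I would describe the channel seen by destination $i$ as the single linear map $X \longmapsto M_i X$ with $M_i = H_{r_i}G_s \in F^{l\times k}_q$, obtained by composing the encoding map $X \mapsto G_sX$ of Theorem~1 with the $\mathbb{F}_q$-linear channel transfer $H_{r_i}\in F^{l\times n}_q$. Each received packet is then one row of $M_iX$, i.e.\ an $\mathbb{F}_q$-linear combination of the $k$ information packets whose coefficient (global encoding) vector is the corresponding row of $M_i$; by the lifting construction this coefficient vector is available at the destination, which is exactly the row-equivalence property already invoked in the proof of Theorem~1.

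Second, I would restate the hypothesis ``$k$ independent coded packets are retrieved'' as the statement that the $l\times k$ matrix $M_i$ contains an invertible $k\times k$ submatrix $A$. Stacking the corresponding $k$ received packets into $Y' \in F^{k\times m}_q$ gives $Y' = AX$, so $X = A^{-1}Y'$ is the unique solution: invertibility of $A$ is precisely the rank-$k$ (full innovation) condition, and uniqueness follows since $A$ has full column rank. Hence the information is recoverable by Gaussian elimination on the augmented system $[\,A \mid Y'\,]$.

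Third, for the \emph{progressive} claim I would spell out the on-the-fly variant: keep the received packets' coefficient vectors together with their partially transformed payloads in reduced row echelon form; when a packet arrives, reduce its coefficient vector against the current pivots, discard it if it becomes zero (non-innovative), otherwise normalize it, register a new pivot, and back-substitute to keep the block fully reduced (Gauss--Jordan). After the $k$-th innovative packet the coefficient block equals $I_k$ and the payload block equals $X$. I would also note that this is the same decoder already used for SNC in Section~\ref{sec:SNC}, whose per-packet processing cost (on the order of $k^2 m$ field operations in total) is consistent with the complexity/energy budget invoked later.

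The step I expect to be the main, if mild, obstacle is the bookkeeping that certifies that Gauss--Jordan returns the original packets $X$ and not merely some basis of their span: this relies on the coefficient vectors being known and on the lifted (systematic-like) structure being preserved under every intermediate $\mathbb{F}_q$-linear operation, so that reducing the coefficient block to $I_k$ forces the payload block to be $X$ itself. The remainder is routine linear algebra over $\mathbb{F}_q$.
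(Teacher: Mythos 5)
Your proposal is correct and follows essentially the same route as the paper's proof: both arguments reduce to the observation that every intermediate operation is an $\mathbb{F}_q$-linear (row) transformation whose coefficients are carried by the lifted structure, so that $k$ innovative packets yield a full-rank system whose unique reduced row echelon form recovers $X$ via on-the-fly Gauss--Jordan elimination. Your version is in fact somewhat more explicit than the paper's (naming the invertible $k\times k$ submatrix $A$ with $Y'=AX$ and spelling out the pivot bookkeeping), but it is the same argument, not a different one.
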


\begin{proof}
The transmitted packets are linearly transformed by the stabilizers while the channel erases full packets at every link with probability $\epsilon_i$. Hence, the overall linear transformation over the encoded packets retrieved at the sink node can be represented by a succession of invertible matrices, i.e. a sequence of elementary row operations. Assuming that $k$ independent encoded packets are retrieved, by basic linear algebra such sequence of matrices is row equivalent to the injected coded packets $G_sX$. This is because the first $k\times k$ minor is non-singular and matrices are classified up to equivalence by their rank. Hence, the set can be progressively decoded at the receiver, as independent encoded packets are arriving (up to $k$) and will correspond to a unique full-rank matrix in reduced row echelon form, which can be efficiently computed at the receiver by Gauss-Jordan elimination algorithm.
\end{proof}

\begin{theorem}
Achievable rate of the $\mathbb{F}_q$-linear multicast erasure satellite channel. The reliable achievable rate of the $\mathbb{F}_q$-linear multicast erasure satellite channel is upper bounded by the Max-flow Min-Cut capacity $C$, $R\leq C$. Let the random variable $\lambda$ represent the number of correctly received packets and the random variable $\xi$ represent the rank of the matrix formed by the $\lambda$ packets. The per-multicast link achievable rate can be computed as a function of the coding rate $\rho_i$, $R_i=\rho_i(1-\eta_i)$ with
\begin{eqnarray}
\eta_i=\epsilon_i(Pr(error,\lambda<k) + Pr(error,\lambda\geq k)),
 \label{eq:E3SUB}
\end{eqnarray}

where the first term is a Binomial, $Pr(error,\lambda<k)=\sum_{j=0}^{k-1}{\binom{n}{j}} \left(1-\epsilon_i\right)^{j}\epsilon_i^{n-j}$ while the second term can be approximated by
\begin{eqnarray}
Pr(error,\lambda\geq k)\approx \sum_{j=k}^{n}Pr(\lambda=j)Pr(\xi=k|\lambda=j),
 \label{eq:E4SUB}
\end{eqnarray}

where again the first factor is a Binomial $Pr(\lambda=j)={\binom{n}{j}} \left(1-\epsilon_i\right)^{j}\epsilon_i^{n-j}$, and
\begin{eqnarray}
Pr(\xi=k|\lambda=j)=\prod_{j=0}^{k-1}\left(1-\frac{1}{q^{k-j}}\right).
 \label{eq:E5SUB}
\end{eqnarray}
\end{theorem}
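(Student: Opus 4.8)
The plan is to establish the three claims separately --- the converse $R\le C$, the rate identity $R_i=\rho_i(1-\eta_i)$, and the explicit form of the residual erasure rate $\eta_i$ --- the first being essentially a citation and the last carrying the real work. For the converse I would note that the scheme assembled from Theorem~1 and Theorem~2 is a valid $\mathbb{F}_q$-linear network code running over a network of memoryless packet-erasure links, so that the network-coding Max-flow Min-Cut theorem \cite{Ahlswede.2000}, in its erasure-network form \cite{Dana.2006}, forbids reliable delivery to any multicast destination at a rate above the Max-flow Min-Cut capacity $C$ between the source and that destination; hence $R\le C$, the bound being approached as $k,n\to\infty$ by the capacity-achieving property of Theorem~1.

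For the rate identity I would argue from the decoding criterion of Theorem~2: a block of $k$ information packets is recovered correctly at destination $i$ precisely when $k$ linearly independent coded packets are retrieved, i.e.\ the composed transfer $H_{r_i}G_s$ has rank $k$. Since $G_s=\mathrm{Lift}(H)=[\,I_k;\,H\,]$ is systematic, the information packets actually lost are exactly those that are both erased by the link and not recoverable by decoding; writing $\eta_i$ for this residual per-packet erasure probability, the destination sees an effective erasure channel of rate $\eta_i$ acting on the $k$ information packets carried per $n$ channel uses, so that the reliably conveyed rate equals the coding rate $\rho_i$ scaled by the useful fraction, $R_i=\rho_i(1-\eta_i)$.

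The core step is the evaluation of $\eta_i$. I would condition first on the event, of probability $\epsilon_i$ under the memoryless model, that the information packet in question is directly erased by the link --- this produces the prefactor $\epsilon_i$, since otherwise the packet is delivered without decoding and is not lost. Given that event, decoding fails in one of two disjoint ways. Either strictly fewer than $k$ of the $n$ transmitted coded packets arrive, $\lambda<k$, which by independence of the link erasures has probability equal to the lower binomial tail $\sum_{j=0}^{k-1}\binom{n}{j}(1-\epsilon_i)^{j}\epsilon_i^{\,n-j}$, i.e.\ $Pr(error,\lambda<k)$. Or at least $k$ arrive, $\lambda=j\ge k$, yet the $j\times k$ matrix they form is rank-deficient, $\xi<k$; by total probability this contributes $\sum_{j=k}^{n}Pr(\lambda=j)\,Pr(\xi<k\mid\lambda=j)$ with $Pr(\lambda=j)=\binom{n}{j}(1-\epsilon_i)^{j}\epsilon_i^{\,n-j}$ again binomial. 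For the conditional rank probability I would use the standard count over $\mathbb{F}_q$: modelling the coding vectors carried by the arriving packets as independent and uniform in $\mathbb{F}_q^{k}$, the probability that $j$ of them span the whole space is $\prod_{\ell=0}^{k-1}(1-q^{\ell-j})$; evaluating this at the worst case $j=k$ gives $\prod_{\ell=0}^{k-1}(1-q^{-(k-\ell)})$ as in \eqref{eq:E5SUB}, and using that value in place of the $j$-dependent expression for every $j\ge k$ yields the approximation \eqref{eq:E4SUB}. Summing the two disjoint contributions gives \eqref{eq:E3SUB}.

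The hard part will be the rank step: one must justify that, after the link erasures and the composed routing/stabilizer transformation $H_{r_i}$, the coded packets surviving at a destination may be treated as carrying independent, uniformly distributed coding vectors, so that the Gaussian-binomial count $\prod_{\ell=0}^{k-1}(1-q^{\ell-j})$ applies; and one must check that replacing it by its worst-case value ($j=k$) is conservative, so that the rank-deficiency term --- and hence $\eta_i$ --- is not under-estimated and $\rho_i(1-\eta_i)$ remains an achievable rate consistent with $R\le C$. The binomial bookkeeping and the Max-flow Min-Cut step are routine by comparison.
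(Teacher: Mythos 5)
Your proposal follows essentially the same route as the paper's proof: the residual erasure rate is decomposed into the two disjoint failure events $\lambda<k$ and $\lambda\ge k$ with rank deficiency, $\lambda$ is treated as a Binomial random variable, and the conditional rank probability is obtained from the count of linearly independent uniform vectors over $\mathbb{F}_q$ (the paper simply cites \cite{Trullols-Cruces.2011} for Eq. \ref{eq:E5SUB} and, like you, justifies the uniformity assumption only as an approximation valid for practical field sizes). Your additional material on the Max-flow Min-Cut converse and on the origin of the prefactor $\epsilon_i$ goes beyond the paper's proof, which is silent on both points. One discrepancy deserves attention: your derivation correctly identifies the second failure contribution as $\sum_{j\ge k}Pr(\lambda=j)\,Pr(\xi<k\mid\lambda=j)$, yet Eq. \ref{eq:E4SUB} as stated uses $Pr(\xi=k\mid\lambda=j)$ --- the probability of decoding \emph{success} --- and you pass from one to the other without taking the complement. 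The paper's own proof shares this ambiguity (it says the quantity to be computed is the probability that the rank equals $k$), so the issue lies in the statement rather than in your reasoning, but you should make the complementation explicit rather than eliding it.
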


\begin{proof}
The first decoding failure event accounts for the case when the number of correctly received packets at the receiver less than \textit{the required number of packets for successful decoding of} $k$ information packets. The second decoding failure event represents decoding failure due to rank deficiency (i.e. correctly received packets are not linearly independent). Clearly, $\lambda$ is a Binomial random variable, and for practical size fields the expression can be approximated assuming that every packet has the same probability to be innovative. Hence, the only probability that needs to be computed is the conditional probability of the rank to be $k$ conditioned by the number of correctly received packets being greater than $k$. The exact conditional expression in Eq. \ref{eq:E5SUB}, has been derived in \cite{Trullols-Cruces.2011}.
\end{proof}

\vspace{-6pt}
\section{NC FUNCTIONAL DOMAIN}
\label{sec:NCFV}
\vspace{-2pt}
\subsection{Integrated NC and NFV architectural design}
When interpreting NC as a functionality to the network, NC function virtualization simply consists in integrating the NC architectural framework described above into existing architectural NFV frameworks. As any virtualized network function (VNF), we will also need first of all to identify physical networks/systems infrastructure with the physical computing, storage, and network resources that will provide NC function virtualization (NCFV) with processing, storage, and connectivity through virtualization, respectively. Note that every virtual infrastructure will have its corresponding time/space scales and communication/computation resources.

The architectural integration is proposed here as a toolbox approach. In particular, a set of elementary NC functionalities are identified that will enable to tailor the use of NC to engineer the throughput and reliability of multiple information network flows depending on the ultimate service operational intent. The resulting VNF will likely operate together with additional virtualized functionalities, which will appropriately be taken care of by the NFV management and orchestration.

Fig. \ref{fig:NCFV_ETSI} shows the architectural design framework of NC and how the functional domain is integrated with the NFV architecture in \cite{NFV002.2013}. Typical solutions for the deployment of NFV are hypervisors and VMs. VNFs can be realized as an application software running on top of a non-virtualized server by means of an operating system (OS) \cite{NFV002.2013}.
\begin{figure*}[ht]
\centering
\includegraphics[scale =0.40] {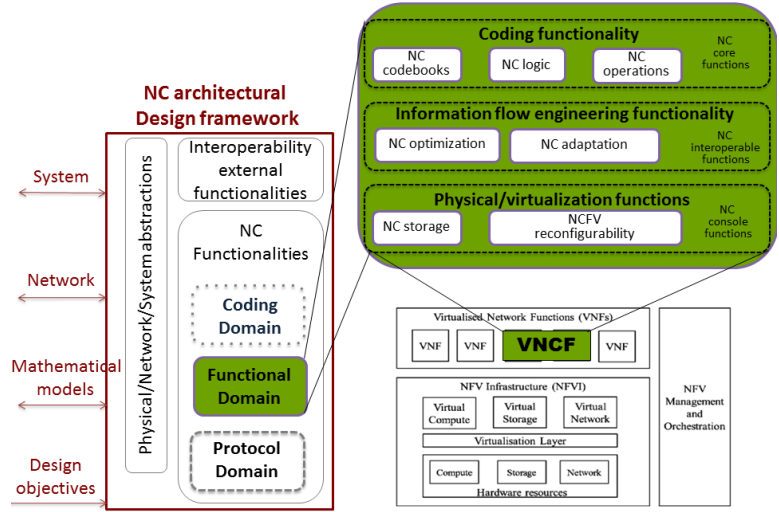}
\caption{Integration of NC architectural design and ETSI NFV architecture in \cite{NFV002.2013} as a toolbox of NC design domains.}
\label{fig:NCFV_ETSI}
\end{figure*}

\subsection{NC Functional Domain design}
Based on the key features of NFV design, our proposed set of basic NC functionalities are distributed into three hierarchical levels based on their significance, universality, and availability. In particular, we define common elementary functionalities \cite{Tan.2016} as follows:
\begin{itemize}
	\item \textbf{NC coding functionalities }
\begin{itemize}
	\item NC Logic: logical interpretation of coding use, coding scheme selection (intra-session/inter-session, coherent/incoherent, file-transfer/streaming, systematic/non-systematic), coding coefficients selection (random/deterministic), etc. 
	\item NC Coding: elementary encoding/re-encoding/decoding operations, encapsulation/de-encapsulation, adding/removing headers, etc. 
	\item NC operations: computation and storage optimization.
\end{itemize}

\item \textbf{NC information flow engineering functionalities }
\begin{itemize}
	\item NC optimization: resource allocation and optimal allocation of NC parameters possibly differentiating network flows according to design targets and (statistical) status of the networks (congestion, link failures, etc).
	\item NC adaptation: control of the time scales for optimization and corresponding reconfiguration across re-encoding nodes across the network. 
\end{itemize}

\item \textbf{NC physical/virtualization functions }
\begin{itemize}
	\item NC Storage: interactions with physical memory. 
	\item NC Feedback Manager: settings and interaction for acquisition of network statistics feedback. 
	\item NC Signaling: coordinating signaling parameters and involved encapsulations and/or protocols. 
\end{itemize}
\end{itemize}

Note that our proposed set of basic functionalities can be easily increased with specialized functionalities and/or future-purpose functionalities.

\vspace{-6pt}
\section{Use Case: Geo-Controlled network connectivity} 
\label{sec:CaseStudy}
\vspace{-2pt}
\subsection{Physical system/network abstraction}
Our interest is to validate the performance of a specific design using the functionalities proposed in the previous sections. 
In particular, we propose a specific VGNCF design for reliable communication services over satellite, called \textit{geo-controlled network connectivity}. 
In the illustrative design proposed here, the optimization functionality will use databases with geo-tagged link statistics and geo-location information of network nodes in the service coverage area for some service operational intents. 
The overall design target here is to achieve a given connectivity and/or target reliability throughout the service coverage area given constraints of computational complexity/energy efficiency.

The interest of NFV is that the same virtualized network functionality can be applied over different operational frameworks and services as well as over different underlying physical networks, including satellite or hybrid networks thus enabling softwarization and rapid innovative deployment. 
In Fig. \ref{fig:VGNCF_Architecture}, we abstract the underlying physical network/system to identify functionalities of NC. All network nodes in the deployment area are geo-localized based on GNSS network. Our proposed use case applies to multiple scenarios consisting of terminals (randomly) distributed in the service area both within and beyond cell coverage of satellites. Connectivity to/from the source is realized via satellites and/or transport networks (e.g. internet, cellular networks), depending on availability but in all cases they are abstracted for our design as sender/sink (end-user) nodes, respectively, through a virtualization layer. 
Logical network then illustrates information flow functionality of network nodes including encoding, re-encoding, and decoding points.  
\begin{figure*}
     \centering
     \includegraphics[scale =0.72]{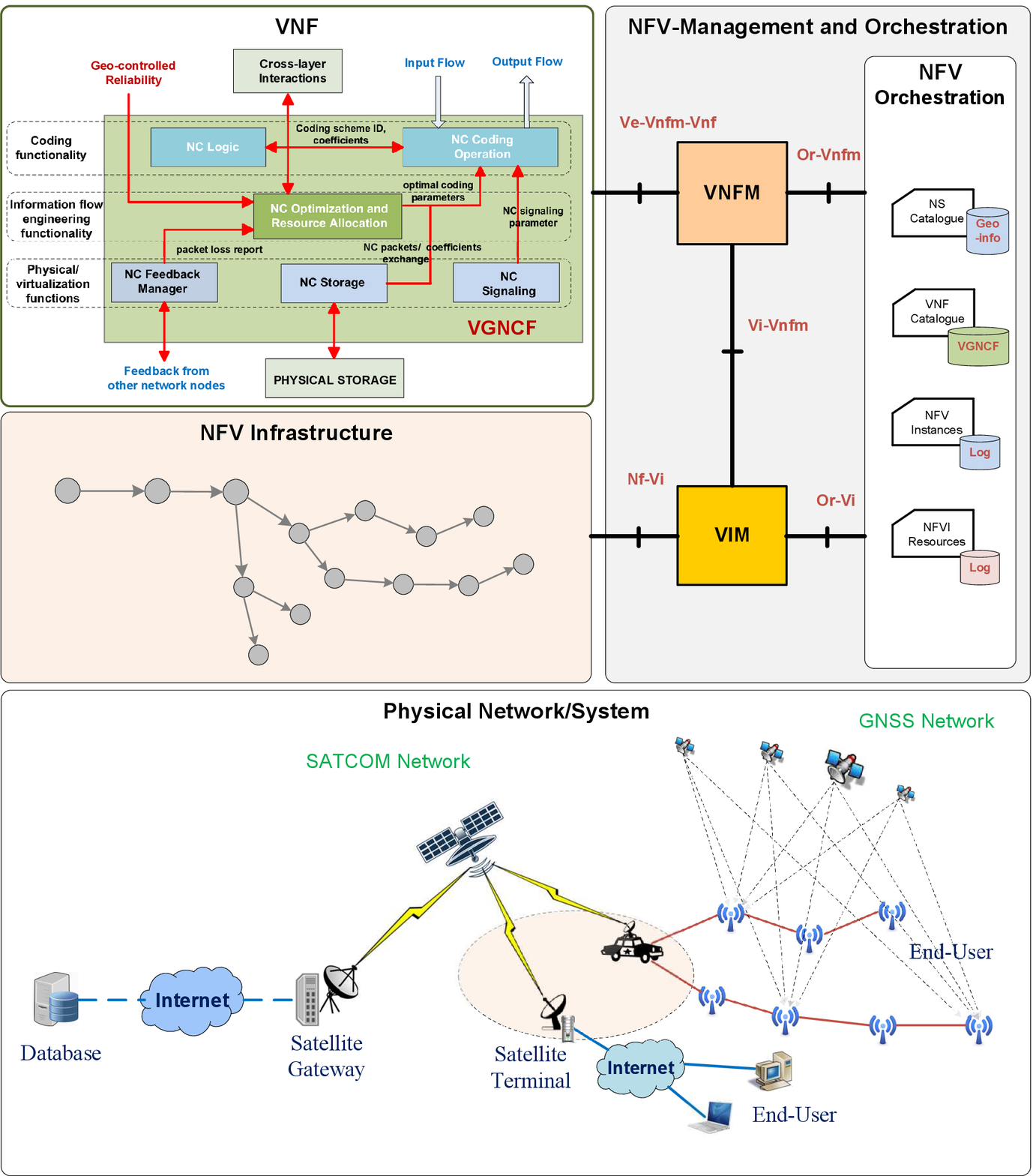}
     \caption{VGNCF integrated as a VNF in ETSI NFV architecture.}
     \label{fig:VGNCF_Architecture}
\end{figure*}

\subsection{Virtualized Geo-NC function architecture}
Fig. \ref{fig:VGNCF_Architecture} denotes our proposed virtualized geo-NC function (VGNCF) in which we show how NCF can be integrated with the ETSI NFV architecture given the abstracted underlying physical system/network as part of NFV Infrastructure (NFVI). The design also indicates exchanges between VGNCF and NFV-Management and Orchestration (NFV-MANO) over reference points. 
Specifically, NFV-MANO is the grey block in Fig. \ref{fig:VGNCF_Architecture} that includes three functional blocks: NFV Orchestrator (NFVO), VNF Manager (VNFM), and Virtualized Infrastructure Manager (VIM). Where NFVO block is considered as the brain of NFV architecture which has two main responsibilities: (1) the orchestration of NFVI resources across multiple VIMs and (2) the life-cycle management of all network services. 
While VNFM manages the life-cycle of VNF instances, VIM is responsible for managing and controlling NFVI resources including physical, virtual, and software resources. 
Further details of the NFV-MANO architectural framework can be found in \cite{NFV-MAN001.2014}.

The detailed functional domain of VGNCF is shown in the VNF block of Fig. \ref{fig:VGNCF_Architecture}.
At the NC core blocks, interactions with other nodes bring into agreement on coding schemes, coefficients selection, etc. NC coding block receives all inputs such as coding scheme, coefficients, coding parameters, packets from storage block, and signaling to perform elementary encoding/re-encoding/decoding operations, etc. At the NC interoperable blocks, geographical location-based information and level of reliability provided by \textit{geo-controlled reliability} block will be given to the NC optimization and resource allocation block so that optimal coding parameters are generated to the NC coding operation block. In addition, packet loss feedback from other network nodes is also an important factor in the resource allocation process. At the last stage, NC console blocks, which connect directly to physical storage and feedback from other network nodes to provide information packets and packet loss rate to the upper stage, respectively.
The proposed NCF will interact with protocol domain, i.e. network, at a suitable position in reference model of each standardized protocol stack e.g. transportation layer.

As denoted in Fig. \ref{fig:VGNCF_Architecture}, NFV-MANO needs the repositories that hold different information regarding network services (NSs) and VNFs (VGNCF is part of VNFs). There are four types of repositories as follows
\begin{itemize}
	\item \textbf{VNF catalogue} represents the repository of all usable VNF packages, supporting the creation and management of the VNF packages. 
	\item \textbf{NS catalogue} represents the repository of all usable NSs.
	\item \textbf{NFV instances} is the repository that holds details of all VNF instances and NS instances, represented by either a VNF record or a NS record, respectively, during the execution of VNF/NS life-cycle management operations.
	\item \textbf{NFVI resources} is the repository that holds information about NFVI resources utilized for the establishment of NS and VNF instances.
\end{itemize}

NFV-MANO functional blocks exchange information with VNFs (e.g. our proposed VGNCF) and between them via reference points, which are summarized as follows:
\begin{itemize}
	\item \textbf{Ve-Vnfm-Vnf}: this reference point is used for exchanges between VNFs and VNFM. The goal of Ve-Vnfm-Vnf includes VNF instantiation/update/termination, verification of VNFs' operating state, forwarding of configuration and events from VNF to VNFM and vice versa, etc. Especially, this reference point also provides \textit{geo-information} such as geo-tagged link statistics and geo-location information for the optimization functionality of network codes.
	\item \textbf{Nf-Vi}: this reference point is used for exchanges between VIM and NFVI by supporting allocation VM with indication of compute/storage resource, updating VM resource allocation, migrating/terminating VM, forwarding of configuration information, failure events regarding NFVI to VIM, etc.
	\item \textbf{Vi-Vnfm}: this reference point is used for exchanges between VNFM and VIM. The exchanges over Vi-Vnfm include NFVI resources reservation information retrieval, NFVI resources allocation/release, etc. 
	\item \textbf{Or-Vnfm}: this reference point is used for exchanges between NFVO and VNFM with main tasks such as NFVI resources authorization/validation/reservation/release/resources allocation for a VNF, VNF instantiation, VNF instance query/update/termination, forwarding of events, other state information about VNF, etc.
	\item \textbf{Or-Vi}: this reference point is used for exchanges between NFVO and VIM. Main tasks of Or-Vi include NFVI resource reservation/release, NFVI resource allocation/update, forwarding of configuration information, usage records regarding NFVI resources to NFVO, etc.
\end{itemize}

\subsection{Exchanges between VGNCF and NFV-MANO blocks}
We show the detailed exchanges between our proposed VGNCF and NFV-MANO blocks via reference points as defined by ETSI reference architecture. As denoted in Fig. \ref{fig:VGNCF_EXCHANGES}, the management and orchestration of VGNCF is highlighted by the following points:
\begin{enumerate}
	\item NS catalogue holds information of all usable NSs in terms of VNFs and description of their connectivity through virtual links. Geo-information e.g. geo-tagged link statistics and geo-location information is also stored in NS catalogue.
	\item VNF catalogue holds information of all usable VNFs in terms of VNF Descriptors (VNFDs). The proposed VGNCF is part of this repository.
	\item NFV instances repository keeps record from VGNCF during its execution of management operations and life-cycle management operations.
	\item NFVI resources hold information about NFVI resources utilized for VNF/NS operations including VGNCF.
	\item Performance monitoring and analytics blocks take charge of collecting and analyzing non-real-time/real-time data e.g. loss rate resulting from operation of VGNCF so that Orchestration Engine (OE) can follow and update coding policy.
	\item Policy management receives information from performance monitoring and analytics block to decide whether coding policy should be changed or not. Some relevant issues can be identification of Point-of-Presence, selection of coding points, resource allocation, computational complexity constraints, etc.
	\item Interaction between VNFs should be considered. For instance, a virtualized routing function implemented provides route information to VGNCF so that our proposed VGNCF can perform the optimization functionality.
\end{enumerate}

\begin{figure*}
     \centering
     \includegraphics[scale =0.60]{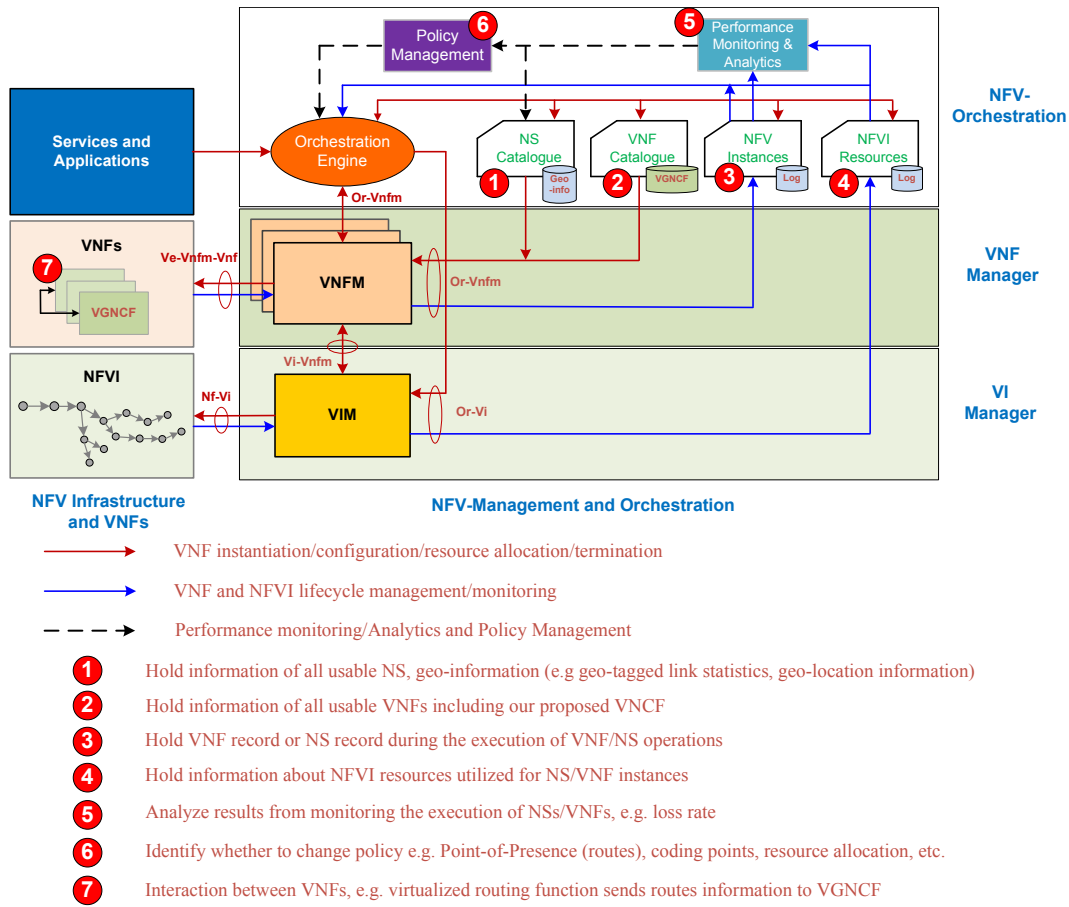}
     \caption{Exchanges between VGNCF and NFV-MANO via reference points.}
     \label{fig:VGNCF_EXCHANGES}
\end{figure*}

It is noted that the design of efficient algorithms for performance monitoring/analytics and policy management is a promising challenge direction to enforce into NFV-Orchestration of not only NCFV but also general NFV design. Therefore, such design has not been investigated in the present work and remains as part of future work.
\begin{figure*}
     \centering
     \includegraphics[scale =0.50]{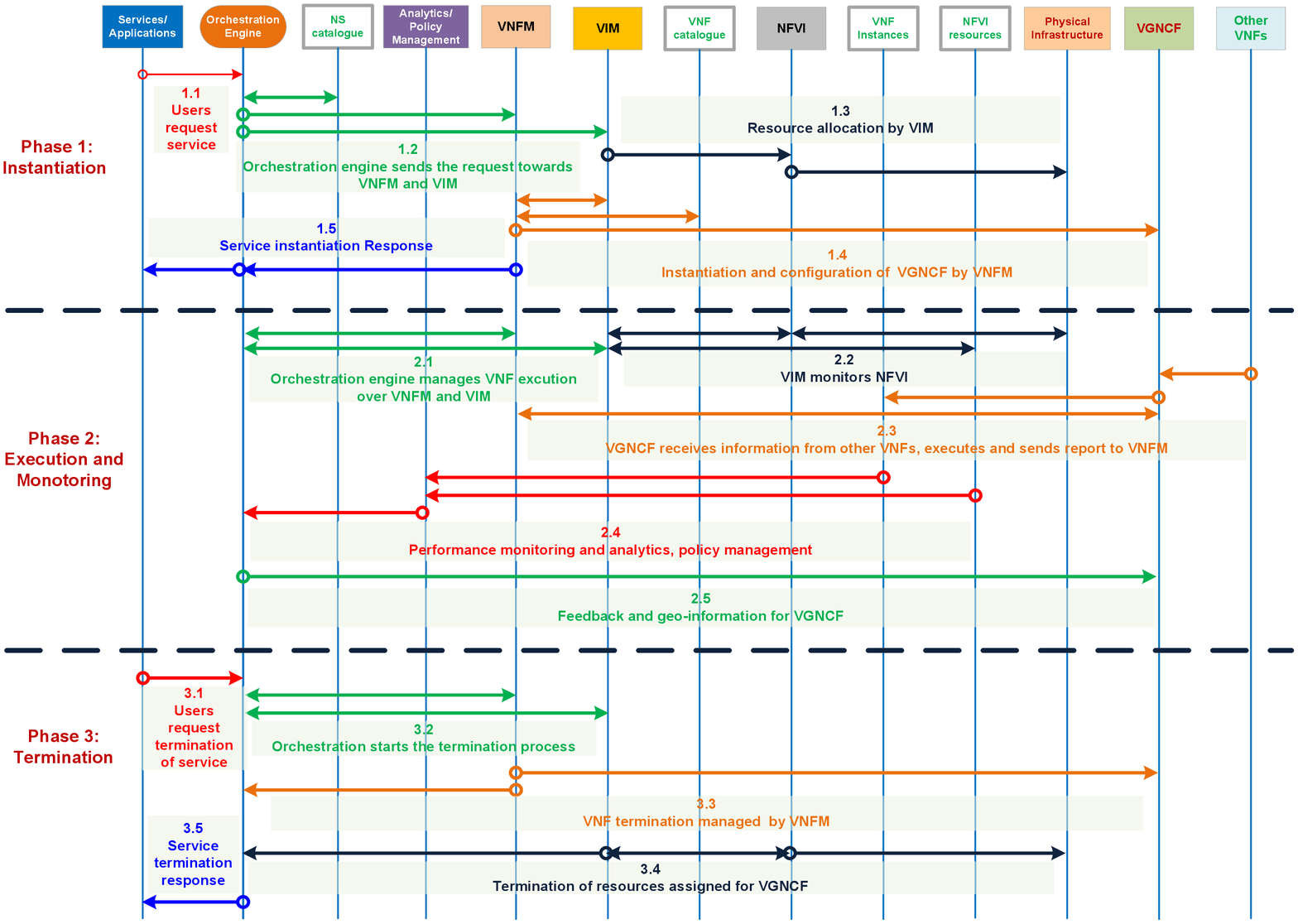}
     \caption{Procedure for the instantiation, execution and monitoring, and termination of VGNCF.}
     \label{fig:PROCEDURE}
\end{figure*}

In Fig. \ref{fig:PROCEDURE}, we denote a general procedure of the instantiation, performance monitoring, and termination corresponding life-cycle management of VGNCF execution in the case that reliability functionality with NC is activated by the centralized controller using our proposed architectural design. The process consists of three phases, including instantiation, execution and monitoring, and termination.
The details are introduced in the following:
\begin{itemize}
	\item \textit{Phase 1.1}: customers/users request the reliability functionality for a communication service over satellite provided by VGNCF.
	\item \textit{Phase 1.2}: the request requires the OE has to read the NS catalogue for the description of NSs and geo-information according to connectivity from the source to the destination node. Based on the descriptor received from the NS catalogue, the engine then delivers a message towards the VNFM and VIM to request for the instantiation, configuration of VGNCF, and resource allocation, respectively.
	\item \textit{Phase 1.3}: VIM takes charge of managing and allocating resource required for the instantiation and execution of VGNCF via VNFI. The underlying physical infrastructure would be allocated respectively.
	\item \textit{Phase 1.4}: on the other hand, VNFM interacts with VNF catalogue for the description of VGNCF in terms of its deployment and operational behavior requirements and delivers necessary configuration for the instantiation of VGNCF at coding points. 
	\item \textit{Phase 1.5}: the instantiation phase has been done by acknowledgment from VNFM to OE and the user in order to confirm the successful deployment of VGNCF on coding points.
	\item \textit{Phase 2.1}: this phase describes the interaction of our proposed VGNCF and the management and orchestration system. During the execution of VNF/NS life-cycle management operation, the OE keeps interacting with VNFM and VIM for real-time management and monitoring.
	\item \textit{Phase 2.2}: the VIM takes care of NFVI and network resources while current information of NFVI resources utilized for VNF/NS operations is stored in NFVI resources catalogue for future use and also available for using by the OE.
	\item \textit{Phase 2.3}: the VGNCF receives routing information from another VNF, e.g. virtualized routing function, and configuration from VNFM for executing the coding functionality. Note that information from operation of VGNCF should be recorded in VNF instances repository for performance monitoring.
	\item \textit{Phase 2.4}: the performance monitoring/analytics and policy management blocks gather information from real-time/non real-time operation of VGNCF recorded in NFV instances and NFVI resources for analytical framework. The output is then sent to the OE in order to update configuration and coding policy.
	\item \textit{Phase 2.5}: the VGNCF requires updating feedback and geo-information for optimization functionality.
	\item \textit{Phase 3.1}: assume that users would like to deactivate NC functionality. Then, a request is sent to the OE.
	\item \textit{Phase 3.2}: the OE delivers the termination message for the VNFM and VIM.
	\item \textit{Phase 3.3}: the VNFM deactivates all functionalities of VGNCF.
	\item \textit{Phase 3.4}: corresponding resources provided for VGNCF will be deallocated and released it back to the VIM. 
	\item \textit{Phase 3.5}: after receiving the termination response from both VNFM and VIM, the OE acknowledges completion of the termination phase to the central controller/users.
\end{itemize}

\subsection{Design of Coding Functionality}
Let $\epsilon_{i}$ be the erasure rate of each link $i$ and $\overline{\epsilon}$ be the vector of per-link erasure rates e.g. $\overline{\epsilon}=(\epsilon_{1},\epsilon_{2})$ for $2$ hops. If all links are equal, we use a unique value $\epsilon$. A source generates $n$ coded packets from a set of $k$ information packets over linear map by a coefficient matrix generated using coefficients from the same Galois field of size $g=2^{q}$. Then, coding rate is given by $r=k/n$. Assume that packet length is $M$ bits, resulting in $m=M/q$ symbols per packet.

Let $\eta_{i}\left(r,\epsilon_{i}\right)$ denote the residual erasure rate after decoding at each single hop $i$. 
The NC codebook will be systematic as introduced in Sec. \ref{sec:SNC}.

In view of multi-hop line networks, we define the reliability after decoding at hop $h$ as 
\begin{eqnarray}
\rho_{R}^{NC}(r,\overline{\epsilon},h)=\text{\ensuremath{\prod}}_{i=1}^{h}\left(1-\eta_{i}\left(r,\epsilon_{i}\right)\right).
 \label{eq:E1}
\end{eqnarray}

Accordingly, we also define the reliability at hop $h$ for the uncoded case as 
\begin{eqnarray}
\rho_{R}^{noNC}(\overline{\epsilon},h)=\text{\ensuremath{\prod}}_{i=1}^{h}\left(1-\epsilon_{i}\right). 
\label{eq:E2}
\end{eqnarray}

\subsection{Optimization Functionality}
\label{sec:OPT}
\subsubsection{Computational complexity:}
\label{sec:COMP}
Let $\beta_{0}$ denote the limitation on computational complexity of each node implementing VGNCF. Let $\beta_{S}(r)$, $\beta_{R_{j}}(r)$, and $\beta_{D}(r)$ denote computational complexity required for NCF at the source, re-encoding points $R_{j} (1\leq j<h)$, and the destination, respectively.
\begin{proposition}
The number of multiplications and additions required for encoding process is $N_{enc}^{M}=(n-k)km$ and $N_{enc}^{A}=(n-k)(k-1)m$, respectively. Therefore, computational complexity for encoding in terms of multiplications and additions is $\beta^{enc}(r)=N_{enc}^{M}+N_{enc}^{A}=(n-k)m(2k-1)$. Whereas computational complexity of finite-length decoding complexity using Gaussian Elimination algorithm in terms of the number of multiplications and additions in $GF(2^{q})$, denoted by $\beta^{dec}(r)=N_{dec}^{M}+N_{dec}^{A}$, can be found in \cite{Garrammone.2013}. 
\end{proposition}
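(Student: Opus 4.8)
The plan is to obtain the encoding counts by a direct finite-field operation count on the systematic network code of Section~\ref{sec:SNC}, and to dispatch the decoding count to the finite-length Gaussian-elimination analysis of~\cite{Garrammone.2013}.

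First I would recall the encoder structure. With generator matrix $G=[I_k\ C]$, $C\in\mathbb{F}_q^{k\times(n-k)}$, the product $X=SG$ yields $k$ systematic packets that are literal copies of $s_1,\dots,s_k$ and therefore incur no $GF(2^q)$ arithmetic, followed by $n-k$ coded packets, each an $\mathbb{F}_q$-linear combination of the $k$ information packets whose coefficients form the corresponding column of $C$. Proceeding symbol by symbol along the packet length $m$, the $\ell$-th symbol of one coded packet is $\sum_{t=1}^{k}C(t,\cdot)\,s_t[\ell]$, which costs exactly $k$ multiplications and $k-1$ additions (the drawing of the random coefficients is not counted as field arithmetic, and no coefficient is assumed to be $0$ or $1$, i.e.\ this is the generic/worst-case count). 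Thus one coded packet costs $km$ multiplications and $(k-1)m$ additions, and with $n-k$ such packets we get $N_{enc}^{M}=(n-k)km$ and $N_{enc}^{A}=(n-k)(k-1)m$; adding them gives $\beta^{enc}(r)=(n-k)m\bigl(k+(k-1)\bigr)=(n-k)m(2k-1)$, as claimed (the dependence on $r=k/n$ enters only through $k$ and $n$).

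For the decoding half there is nothing to prove: the sink performs progressive Gauss--Jordan elimination (Sec.~\ref{sec:SNC}) on the retrieved $k$-dimensional coded set, and the number of multiplications and additions in $GF(2^q)$ for this finite-length procedure, as a function of the block length and dimension, is exactly what is tabulated in~\cite{Garrammone.2013}; one simply instantiates those formulas with the present $(k,n,m)$ to read off $\beta^{dec}(r)=N_{dec}^{M}+N_{dec}^{A}$. I do not anticipate a real obstacle here --- the only delicate points are the bookkeeping conventions (systematic phase free, coefficient handling and headers excluded, worst-case coefficients), since the encoding claim is an elementary count and the decoding claim is quoted rather than derived.
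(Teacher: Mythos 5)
Your proposal is correct and matches the derivation the paper implicitly relies on: the paper states the proposition without an explicit proof, and the intended justification is exactly your elementary count ($k$ multiplications and $k-1$ additions per symbol of each of the $n-k$ non-systematic packets, with the systematic phase costing nothing), together with the citation to \cite{Garrammone.2013} for the Gaussian-elimination decoding count. The arithmetic $(n-k)km+(n-k)(k-1)m=(n-k)m(2k-1)$ checks out.
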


Each re-encoding point $j$ will decode and re-encode the linear combinations before forwarding the coded packets towards next hops. This implies that each re-encoder is also a receiver on the path. Moreover, without decoding the re-encoder does not know which packets are innovative, while with decoding the relay may do a more intelligent re-encoding operation. Therefore, the complexity required for the relay is total of computational complexity for decoding and re-encoding, $\beta_{R_{j}}(r)=\beta^{dec}(r)+\beta^{enc}(r)$. Whereas $\beta_{S}(r)=\beta^{enc}(r)$ and $\beta_{D}(r)=\beta^{dec}(r)$.
Assume that coding rate $r$ is the same for all the nodes. Then $\beta_{S}(r)$ is less than $\beta_{R_{j}}(r)$ and $\beta_{D}(r)$. 

Note that computational complexity can be equivalently mapped into energy consumption in terms of power per computational unit e.g. logic gates. However, such specification depends on specific hardware that NC function is deployed. As a generalized approach, we thus briefly mention energy consumption via computational complexity in terms of multiplications and additions. An example of the energy consumption of NC applied for low-power electronics can be found in \cite{Angelopoulos.2011}.

\subsubsection{Utility Function:}
Identification of utilities allows the optimization of the network coded flows. We are interested in the design of when NC should be activated in terms of (1) computational complexity/energy consumption and (2) target reliability after decoding $(\rho_{0})$. In order to do so, we define the following utility function: 
\begin{eqnarray}
u^{act}(r,\overline{\epsilon},\rho_{0})=\frac{f^{NC}(r,\overline{\epsilon},\rho_{0})}{f^{COST}(r)},
 \label{eq:E3}
\end{eqnarray}

where $f^{NC}(r,\overline{\epsilon},\rho_{0})$ accounts for the goodness of the encoding/decoding scheme in achieving target performance $\rho_{0}$. Whereas $f^{COST}(r)$ accounts for the cost in terms of computational complexity/energy consumption. The utility function denotes the energy efficiency in terms of the ratio between the goodness and the computational complexity. 
We define the goodness and cost function, respectively as follows 
\begin{eqnarray}
f^{NC}(r,\overline{\epsilon},\rho_{0})=\rho_{R}^{NC}(r,\overline{\epsilon})-\rho_{0},
 \label{eq:E4}
\end{eqnarray}
\begin{eqnarray}
f^{COST}(r)=\beta_{S}\left(r\right),
 \label{eq:E5}
\end{eqnarray}

with $\beta_{S}\left(r\right)$ is computational complexity of the source as defined in Section \ref{sec:COMP}.

\subsubsection{Optimized operative ranges of performance:}
The source identifies the upper bound for the energy efficiency of coding strategy and the optimized $\rho_{R}^{NC}(.)$ that the source should provide to the destination according to a given $(\overline{\epsilon},\rho_{0})$ and constraint of computational complexity. Note that given the constraint, our virtual reliability functionality should self-adapt to computational limitations. Underlying geo-tagged channel statistics are assumed to be stored in NS catalogue (see Fig. \ref{fig:VGNCF_Architecture}).
The upper bound for energy-efficient coding rate is given by the following proposition.
\begin{proposition}
The source identifies at which rate it maximizes its own utility under constraints of computational complexity/energy consumption by the following optimization strategy:
\begin{equation}
\begin{aligned}
& \underset{r}{\text{argmax}}
& & u^{act}(r,\overline{\epsilon},\rho_{0}) \\
& \text{subject to}
& & \beta_{D}(r)\le\beta_{0},\\
&&& \beta_{R_j}(r)\le\beta_{0}.
\end{aligned}
\label{eq:E6}
\end{equation}
Numerical results reveal that the utility function has the property of quasi-concavity. 
Moreover, $\beta_{D}(r)$ and $\beta_{R_j}(r)$ are increasing functions with redundant coded packets. Therefore, Problem (\ref{eq:E6}) is equivalently quasi-convex optimization and can be efficiently solved by bisection methods \cite{Boyd.2004}.
The utility may hold a minus value which represents the penalty since the design target is not satisfied. 
Assume computational complexity limitation is large enough, the optimal utility is not necessarily at coding rates that make $\rho_{R}^{NC}(.)=1$.
\end{proposition}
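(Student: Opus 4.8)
The plan is to establish the proposition in four stages: (i) characterize the feasible set of coding rates induced by the two complexity constraints, (ii) invoke the quasi-concavity of the utility to recast Problem (\ref{eq:E6}) as a quasi-convex program, (iii) justify the bisection method, and (iv) verify the two qualitative remarks about the sign of the utility and the location of the optimum.

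First I would make the monotonicity claim precise. Since $k$ is fixed and adding redundancy means increasing $n$, the coding rate $r=k/n$ decreases as the number of redundant packets $n-k$ grows. By the encoding complexity $\beta^{enc}(r)=(n-k)m(2k-1)$ derived above, $\beta^{enc}$ is increasing in $n-k$ and hence a decreasing function of $r$; the Gaussian-elimination count $\beta^{dec}(r)$ taken from \cite{Garrammone.2013} likewise grows with the block length $n$, so $\beta_{D}(r)=\beta^{dec}(r)$ and $\beta_{R_j}(r)=\beta^{dec}(r)+\beta^{enc}(r)$ are both decreasing in $r$. Consequently the feasible set $\{r:\beta_{D}(r)\le\beta_{0},\ \beta_{R_j}(r)\le\beta_{0}\}$ is upward closed in $r$, i.e. an interval $[r_{\min},1]$ (restricted to the admissible rationals $k/n$), where $r_{\min}$ is the smallest rate for which both complexities still fit within $\beta_{0}$; in particular it is convex, and Problem (\ref{eq:E6}) becomes the maximization of $u^{act}$ over this interval.

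Next, taking the quasi-concavity of $u^{act}(\cdot,\overline{\epsilon},\rho_{0})$ as given — it is asserted from the numerical evidence, and I expect that proving it directly from the closed form of $\rho_{R}^{NC}$ in (\ref{eq:E1}) together with the double binomial sum (\ref{eq:E7}) would be the genuinely hard part of any fully rigorous argument — the problem reduces to minimizing the quasi-convex function $-u^{act}$ over a convex set, which is a quasi-convex program in the sense of \cite{Boyd.2004}. For such programs the sublevel sets $\{r:-u^{act}(r)\le t\}$ are intervals, so for each candidate value $t$ the feasibility of $-u^{act}(r)\le t$ over $[r_{\min},1]$ can be decided by a one-dimensional search, and a bisection on $t$ then converges to the optimal value, exactly as in the quasi-convex bisection routine of \cite{Boyd.2004}. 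Because $\beta_{D}$ and $\beta_{R_j}$ are monotone in $r$, identifying $r_{\min}$ and hence the feasible interval is itself trivial, so the overall scheme is efficient.

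It remains to dispatch the two remarks. For the sign of the utility: by (\ref{eq:E4}), $f^{NC}=\rho_{R}^{NC}-\rho_{0}$, so whenever the best reliability attainable on the feasible interval is strictly below the target $\rho_{0}$, the numerator, and hence $u^{act}$, is negative, which is the claimed penalty regime. For the location of the optimum under a generous budget: as $r\downarrow 0$ the residual erasure rates $\eta_{i}$ in (\ref{eq:E7}) vanish, so $\rho_{R}^{NC}(r,\overline{\epsilon})\to 1$ by (\ref{eq:E1}), while $\beta_{S}(r)=\beta^{enc}(r)=(n-k)m(2k-1)\to\infty$; hence $u^{act}=(\rho_{R}^{NC}-\rho_{0})/\beta_{S}\to 0$. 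Since $u^{act}$ already takes strictly positive values at moderate rates whenever $\rho_{0}<1$, the maximizer cannot coincide with a rate that drives $\rho_{R}^{NC}$ to $1$, and by quasi-concavity it is attained in the interior of the feasible interval. The main obstacle, as flagged, is supplying a self-contained proof of quasi-concavity rather than relying on the numerical observation; once the monotonicity of the complexity functions is in hand, everything else is routine.
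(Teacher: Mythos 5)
Your proposal is correct and follows essentially the same route as the paper, which gives no separate proof beyond the assertions embedded in the proposition itself: numerically observed quasi-concavity of $u^{act}$, plus monotonicity of $\beta_{D}$ and $\beta_{R_j}$ in the number of redundant packets, turning Problem (\ref{eq:E6}) into a one-dimensional quasi-convex program over an interval of rates solvable by bisection. Your added details --- the explicit characterization of the feasible set as $[r_{\min},1]$, the sign analysis of $f^{NC}$, and the limiting argument that $u^{act}\to 0$ as $r\downarrow 0$ while $\beta_{S}\to\infty$ --- merely flesh out what the paper asserts; the only overstatement is the claim that quasi-concavity forces the maximizer into the \emph{interior} of the feasible interval, which does not follow in general but is also not needed for the proposition's weaker ``not necessarily at rates making $\rho_{R}^{NC}(.)=1$'' conclusion.
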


Our proposed strategy identifies optimal points that bring optimal benefit for the source's viewpoint. It is thus necessary to identify optimized operative ranges of performance so that the destination is aware and admits some variations in the quality. 
At the source or a centralized controller, the cognitive algorithm to identify optimized operative ranges is briefly realized as follows:
 	
	(1) identify maximal utility $u_{max}^{act}(r,\overline{\epsilon},\rho_{0})$ and respective $r$, $\rho_{R}^{NC}(r,\overline{\epsilon})$ given $(\overline{\epsilon},\rho_{0})$,	
	
	(2) determine $r$ and $\rho_{R}^{NC}(r,\overline{\epsilon})$ that satisfies $u_{min}^{act}(.) \leq u^{act}(r,\overline{\epsilon},\rho_{0})\leq u_{max}^{act}(.)$, with $u_{min}^{act}(.)$ is the lower bound, and
	
	(3) activate NC functionality if the range of performance is acceptable by users. 

	Note that we can assume the lower bound is just the smallest value of utility that the target $\rho_{0}$ is still satisfied. For the sack of comparison, we also assume that even though the target cannot be met for any coding rate due to significant erasure process and/or low complexity constraint, the source should show its effort to improve network reliability as much as possible regardless its utility may still be minus.	

\section{PERFORMANCE EVALUATION}
\label{sec:Numerical}
\vspace{-2pt}
\subsection{Per-link achievable rate region}
\label{sec:RateRegion}
In this section, we indicate that by adding redundancy not only at the source but also at intermediate nodes, per-link achievable rate region goes beyond that of the case that NC is applied only at the source.
We consider a single-source multicast network as presented in many communication systems, e.g. a satellite network in which a satellite multicasts information packets to a number of receivers within the service coverage area \cite{MAVazquez.2015Fq}. Where both the uplink and the downlink are erasure channels. 
We denote the first case as \textit{NC-case} where NCF is applied at both the source and the intermediate node. While the second one is with NC at the source only and known in the NC literature as \textit{end-to-end coding} to which NC can be compared with since the erasure seen by the encoder is the total erasure in the network. 

It is noted that we use interchangeably the target reliability $\rho_{0}$ and $\eta_0$, the target residual packet erasure rate after decoding at the receiver, with $\eta_0 = 1 - \rho_0$.
Let denote $R^{NC}=r\left(1-\eta^{NC}\right)$ and $R^{e2e}=r\left(1-\eta^{e2e}\right)$ as the achievable rate for NC case and end-to-end coding, respectively, where $\eta^{NC}$ is residual erasure rate after decoding for 2-hop networks and $\eta^{e2e}$ corresponds to residual erasure rate after decoding for single-hop networks but with erasure rate is total erasure of the 2-hop network. 
In the below figures, we show achievable rate region with respect to per-link erasure rates for different target $\eta_{0}$. First, we consider and analyze the behavior of $R^{e2e}$  and $R^{NC}$ with a limitation of coding rate. We then evaluate the impact of different limitations of coding rate on the behavior of $R^{e2e}$ and $R^{NC}$.

\begin{figure*}
     \centering
     \subfloat[][$R^{e2e}$]			{\includegraphics[scale =0.50]{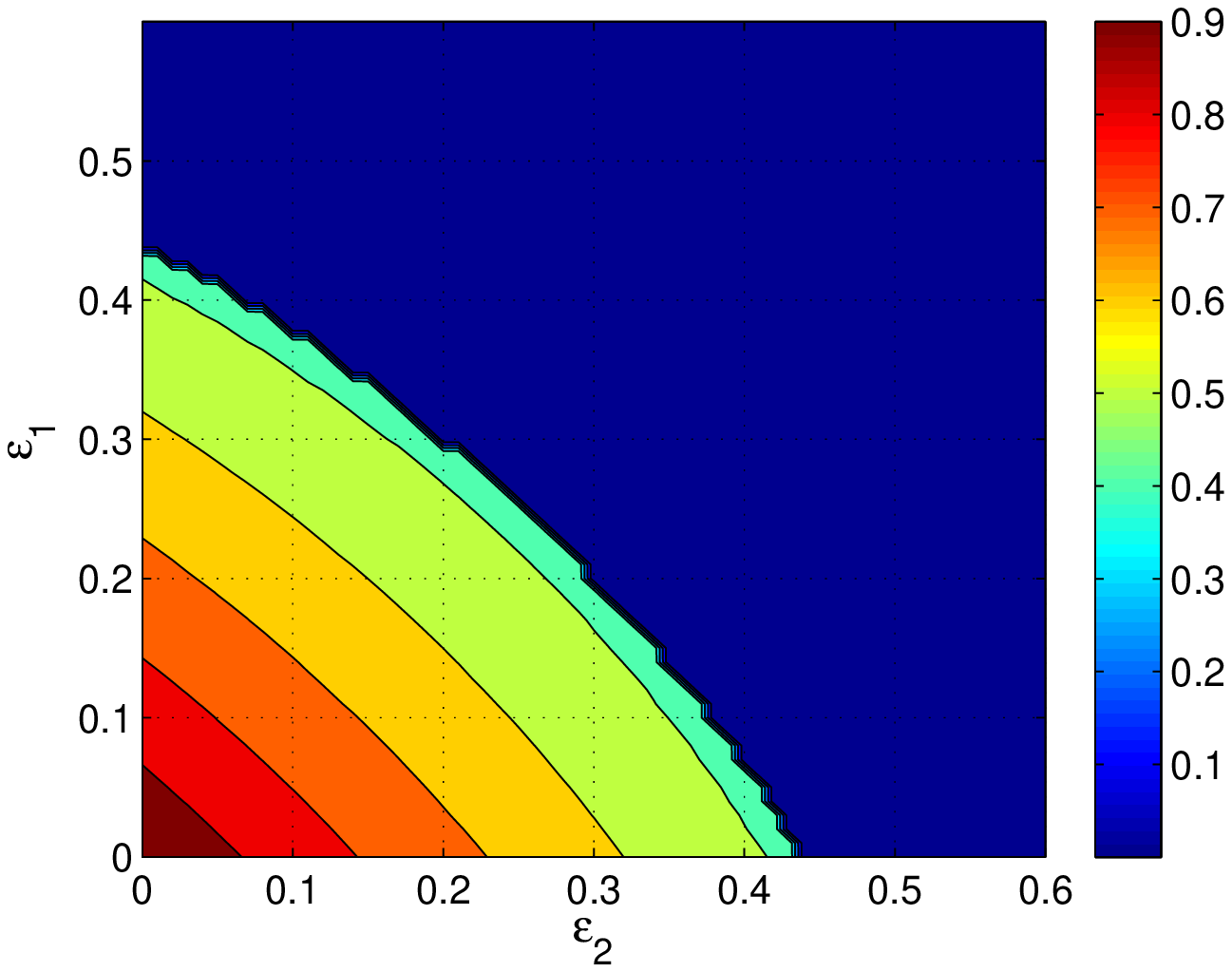}\label{fig:Re2e_5_r_05}}
     \subfloat[][$R^{NC}$]			{\includegraphics[scale =0.50]{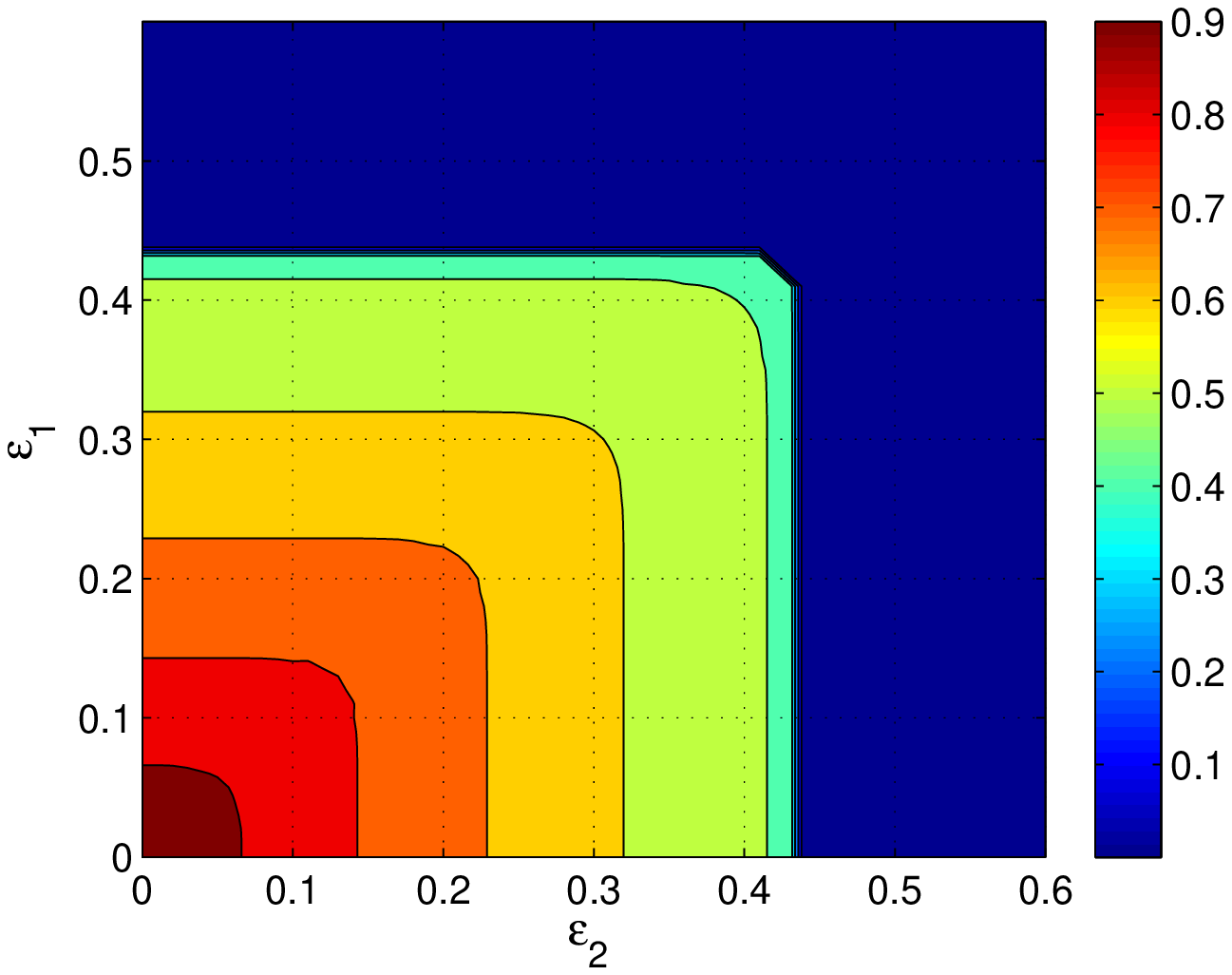}\label{fig:Rnc_5_r_05}}
     \caption{Per-link achievable rate region with respect to link erasure rates for $\eta_{0}=5\%$, $r\in[0.5,1]$.}
     \label{fig:R_5_r_05}
\end{figure*}
\begin{figure*}
     \centering
     \subfloat[][$R^{e2e}$]			{\includegraphics[scale =0.50]{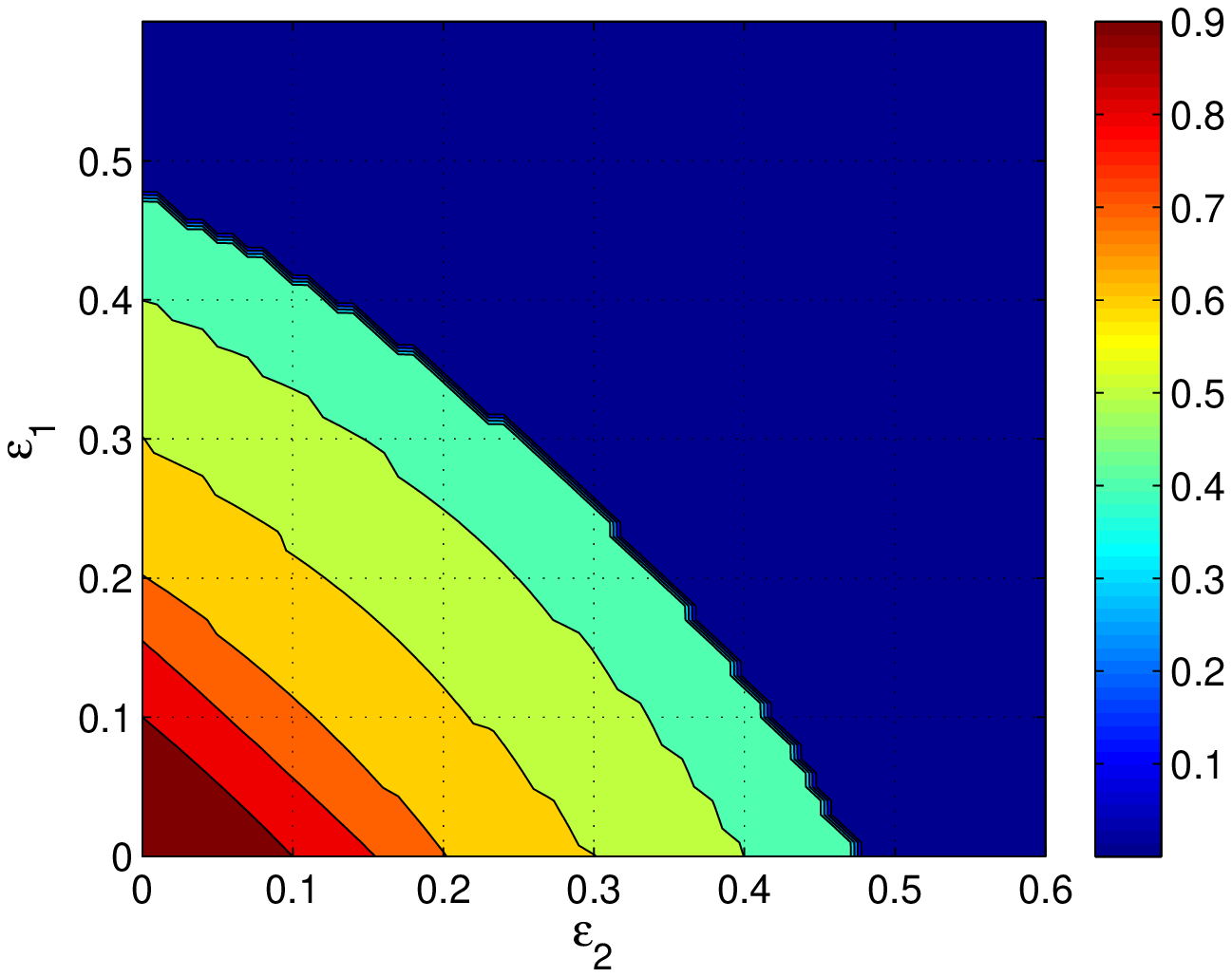}\label{fig:Re2e_15_r_05}}
     \subfloat[][$R^{NC}$]			{\includegraphics[scale =0.50]{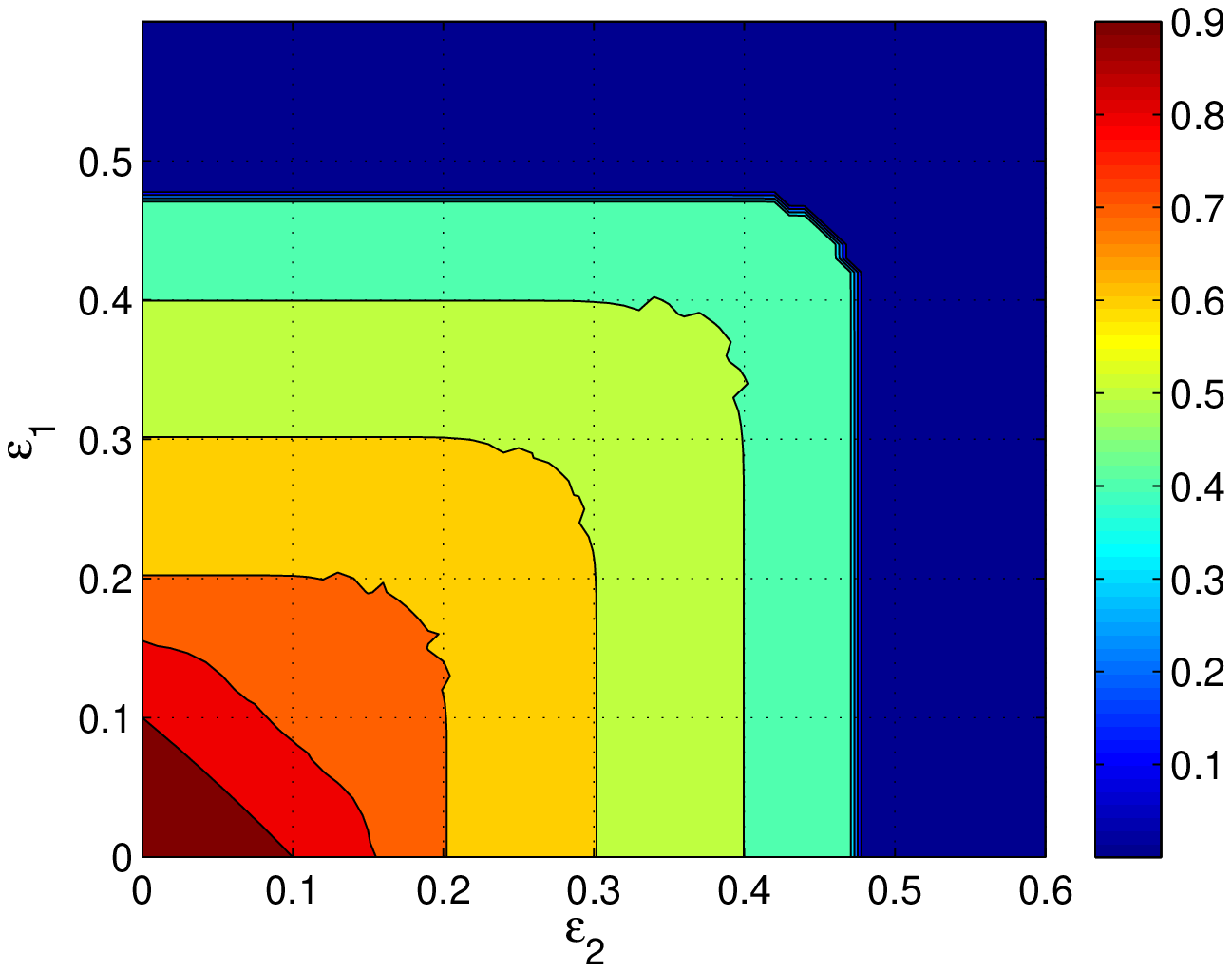}\label{fig:Rnc_15_r_05}}
     \caption{Per-link achievable rate region with respect to link erasure rates for $\eta_{0}=15\%$, $r\in[0.5,1]$.}
     \label{fig:R_15_r_05}
\end{figure*}
\begin{figure*}
     \centering
     \subfloat[][$R^{e2e}$]			{\includegraphics[scale =0.50]{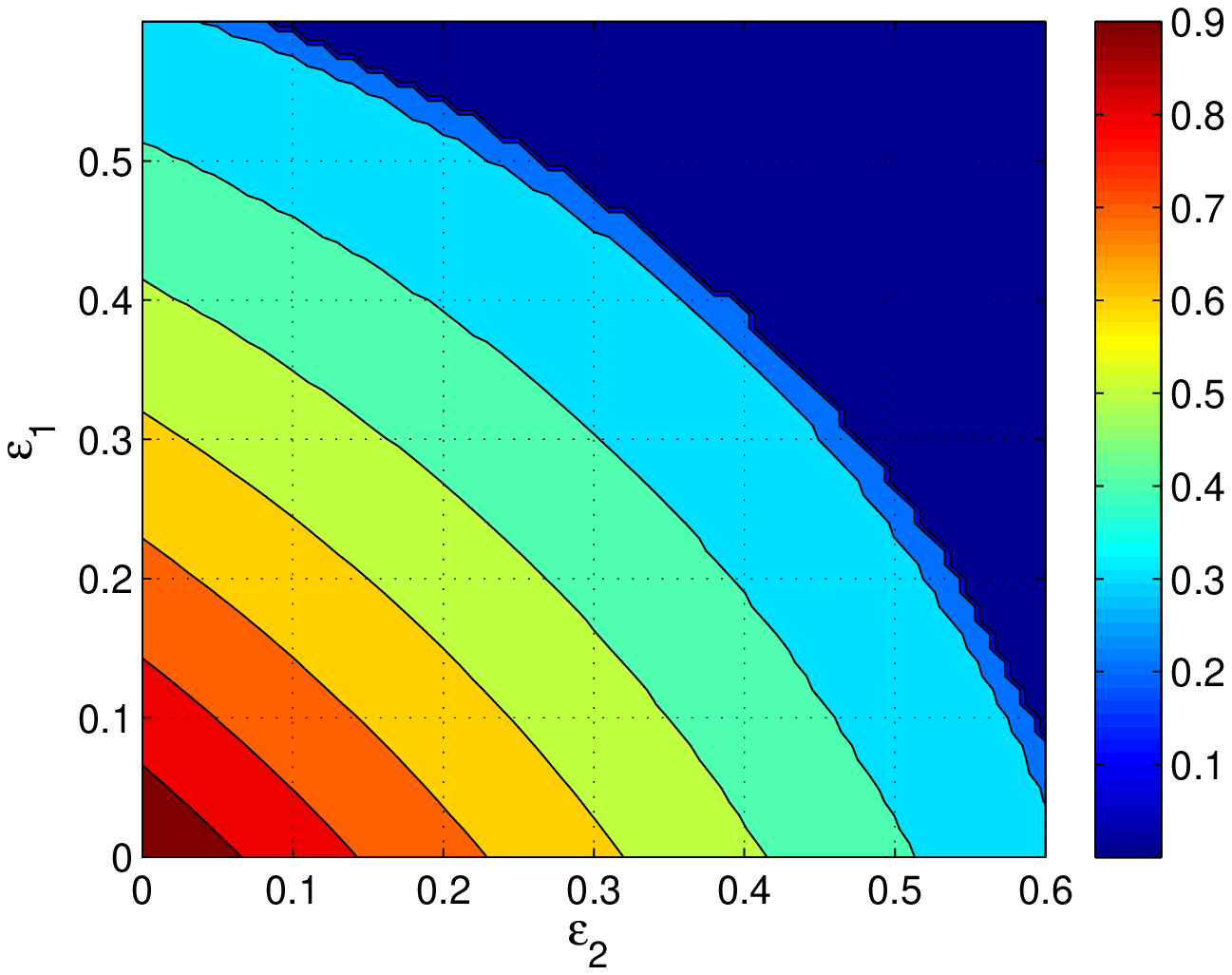}\label{fig:Re2e_5_r_03}}
     \subfloat[][$R^{NC}$]			{\includegraphics[scale =0.50]{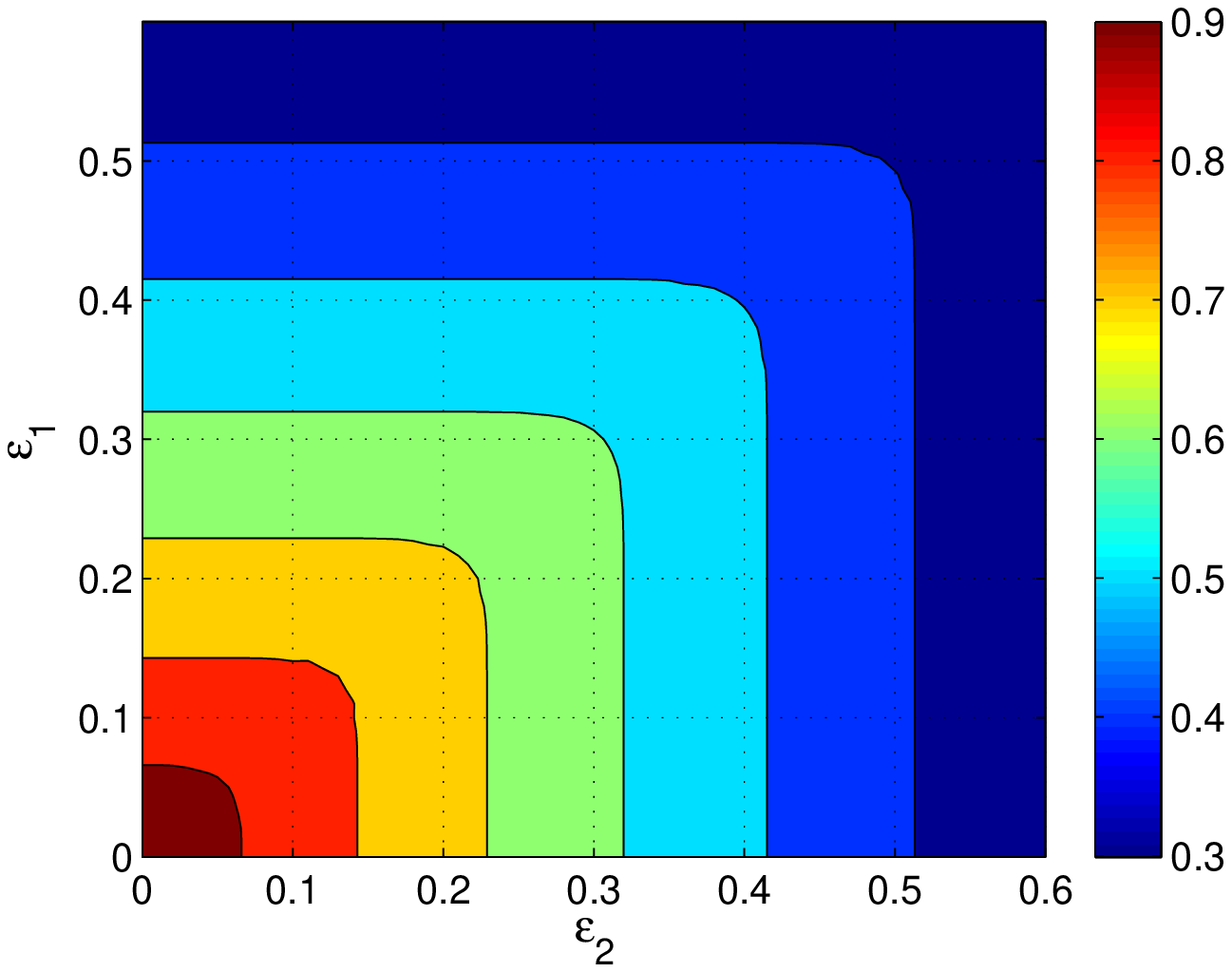}\label{fig:Rnc_5_r_03}}
     \caption{Per-link achievable rate region with respect to link erasure rates for $\eta_{0}=5\%$, $r\in[0.3,1]$.}
     \label{fig:R_5_r_03}
\end{figure*}
Figures \ref{fig:R_5_r_05} and \ref{fig:R_15_r_05} depict achievable rate region of the two coding schemes w.r.t. various values of link erasure rates for different $\eta_0$, where coding rate is limited in range from $0.5$ to $1$. For each value of ($\epsilon_1,\epsilon_2$), we choose a coding rate as large as possible so that $\eta^{e2e}$ and $\eta^{NC}$ satisfy $\eta_0$, respectively with end-to-end coding and coding at the source and re-encoding at the intermediate node. We term achievable region as the region where $\eta_0$ is satisfied. Otherwise, the region in blue represents the cases in which there is not any coding rate in the given range that meets $\eta_0$.

As denoted in Fig. \ref{fig:Re2e_5_r_05}, $R^{e2e}$ reaches $0.9$ since channel condition is good. However, since link erasure rates increase, respective regions such as red, orange, etc show the reduced $R^{e2e}$. This is due to lower coding rate required to meet $\eta_0$ in worse channel conditions. It is interesting to note that each region is shaped by curves. On the other hand, Fig. \ref{fig:Rnc_5_r_05} denotes $R^{NC}$ w.r.t. link erasure rates with $\eta_0=5\%$ where achievable regions are shaped by squares. In general, coding at the source and re-encoding at the intermediate node can widen the achievable region to approximately $90\%$ if compared to end-to-end coding. The detailed proof can be seen in Appendix \ref{app:RateRegion}.
Consider Fig. \ref{fig:R_15_r_05} in which target reliability is now changed to $\eta_0=15\%$, we observe that the shapes of the contours are the same as in Fig. \ref{fig:R_5_r_05}. However, due to the increase of $\eta_0$, the source only needs a smaller value of redundant packets to satisfy the requirement, thus achievable rate regions are extended if compared with those of Fig. \ref{fig:Re2e_5_r_05}.

In addition, in Fig. \ref{fig:R_5_r_03}, regions of $R^{e2e}$ and $R^{NC}$ have been extended by appearance of some new regions located at points with higher $\epsilon_i$ $(i=1,2)$ if compared with those of Figures \ref{fig:R_5_r_05}-\ref{fig:R_15_r_05}. Main reason is that with a wider range of coding rate, regions with very bad channel conditions which $\eta_0$ cannot be satisfied by any $r\in[0.5,1]$ are now covered due to a wider range of feasible coding rate.

In general, we can conclude that 	
	\begin{itemize}
		\item The shapes of $R^{e2e}$ and $R^{NC}$ are curves and squares, respectively, regardless the constraints of $r$ and $\eta_0$. The difference is the extension of achievable rate region when a wider range of coding rate is allowable.
		\item Achievable rate region obtained with NC case is almost twice wider than that of end-to-end coding for the same limitation of coding rate and $\eta_0$.
		\item The shapes of $R^{NC}$ and $R^{e2e}$ are symmetric over the diagonal. Therefore, $R^{NC}$ and $R^{e2e}$ are independent of the  order of the two links. 
	\end{itemize}	

Our numerical results indicate that NC at the source and re-encoding at the intermediate nodes can enhance significantly network performance in terms of achievable rate. 
Therefore, we will apply NC whenever an intermediate node is ready to implement NCF. For the latter, in the specific application of geo-controlled network connectivity, we evaluate the optimized utility, reliability, and connectivity gain beyond the service coverage area of satellite with assumption that NC is deployed both at the source and intermediate nodes along the path.
\subsection{Geo-controlled network connectivity}
In this section, we evaluate the optimized utility, reliability, computational complexity, and connectivity gain for the case of using our design of VGNCF with respect to the number of hops for communication services beyond the coverage area of satellite in low and high complexity constraints. Databases with geo-tagged link statistics and geo-location information are utilized for the optimization functionality towards the energy-efficient use of network resources.
The proposed solution can cope with technical problems such as providing transmission to devices at low power locations (e.g. long distance and/or obstacles) or to extend the transmission beyond the coverage of satellite. We evaluate the potential of our solution through simulation results using Matlab. 

For illustrative purposes, we assume that a great number of network devices are uniformly distributed in a deployment area. 
All links undergo the same erasure rate for different cases of $\epsilon=0.1$ and $0.15$, while $k=50$ information packets, $M=100$ bytes, $q=8$, $\rho_{0}=80\%$ for two different computational complexity constraints, $\beta_{0}^{low}=125000$ and $\beta_{0}^{high}=150000$ operations. Coding rate is optimized according to Section \ref{sec:OPT}.
\subsubsection{Network Reliability and Computational Complexity:}
\label{sec:Reliability}
We conduct various numerical results to evaluate how network performance will be improved with VGNCF according to system limitations and different network conditions such as erasure rate, path length, etc.
\begin{figure}
     \centering
     \includegraphics[scale =0.65]{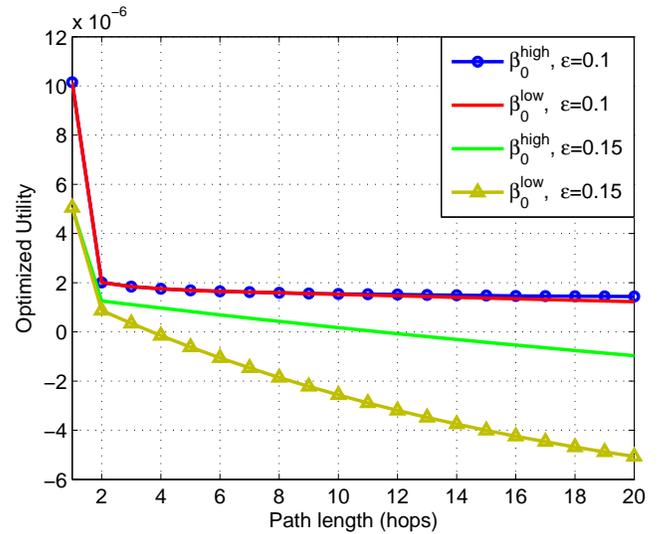}
     \caption{Optimized utility of the source for different path length and erasure rates according to low and high complexity constraints, with $\rho_0=80\%$.}
     \label{fig:UTILITY}
\end{figure}

First, we show the optimized utility of the source for different path length and erasure rates according to low and high complexity constraints. As denoted in Fig. \ref{fig:UTILITY}, the optimized utility is declined w.r.t. the length of path. This is because of that the longer path, the more residual erasure rate accumulated along the path. The source thus needs more redundancy, i.e. additional complexity cost, to cope with the impact of residual erasure rate. In particular, in case of $\beta_{0}^{low}$ with $\epsilon=0.15$, the computational constraint only ensures the design target $\rho_{0}=80\%$ for up to a path of three hops. 
In general, the higher constraint may provide better utility for the source. As for $\beta_{0}^{high}$ with $\epsilon=0.1$, it is clear to see the affect of $\beta_{0}^{high}$. For some twenty hops, the utility is still obtained an optimized value with reliability satisfying the design target. 
\begin{figure}
     \centering
     \includegraphics[scale =0.48]{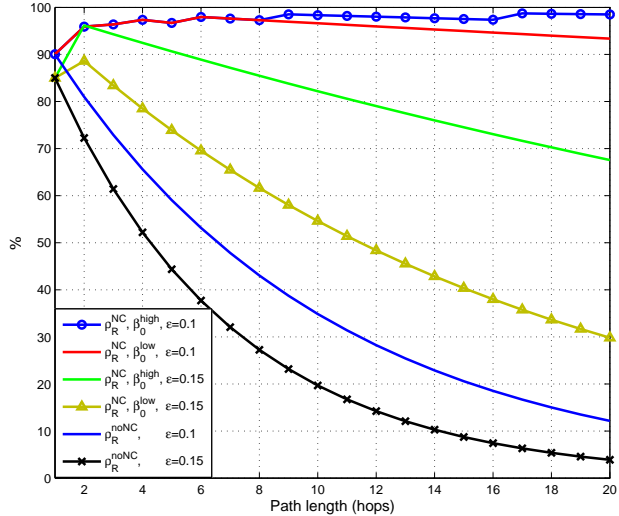}\label{fig:reliability}     
     \caption{Reliability with NC, $\rho_R^{NC}$, and uncoded case, $\rho_R^{noNC}$, according to optimized utility respectively for different path length and erasure rates, with $\rho_0=80\%$.}
     \label{fig:Reliability}
\end{figure}

A glance at Fig. \ref{fig:Reliability} reveals that network reliability with NCF outperforms than that of the case of transmission without NC. Note that in these plots, we only denote the path length up to $20$ hops in which the performance of no-NC case is extremely low and thus $20$ hops are large enough for our comparison. 
In particular, $\beta_{0}^{high}$ can guarantee the target $\rho_{0}=80\%$ for a path length of more than $20$ hops (even some hundred hops) in low erasure rate. For longer paths, the NCF just needs to increase the redundant level of combined packets in order to cope with the residual erasure rate.
In comparison with NC case, network performance without NC (no-NC case) degrades dramatically with the number of hops. 
Especially, consider such high complexity constraints, the target is only satisfied up to approximately $12$ hops in bad conditions e.g. $\epsilon=0.15$. The reason is that the longer the transmission path, the lower the connectivity due to physical limits. The limitations of redundant combined packets then cannot provide the packet successfully-received after decoding as the design target. Even though maximum redundancy is chosen, the utility function is not possible to reach the maximum point. 
\begin{figure}
     \centering
     \includegraphics[scale =0.65]{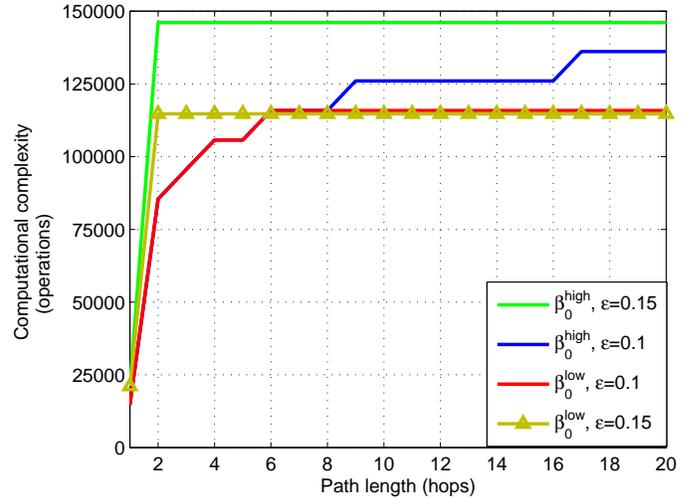}\label{fig:Complexity_Operations}
     \caption{Computational complexity required by the intermediate nodes w.r.t. different erasure rates for low and high complexity constraints, with $\rho_0=80\%$.}
     \label{fig:COMPLEXITY}
\end{figure}

As we have seen, the target reliability can be satisfied for different erasure rates depending on the redundancy of coded packets generated by the source, i.e. some computational cost is needed for activating NCF and thus equivalently requiring more energy consumption at network nodes.
In Fig. \ref{fig:COMPLEXITY}, we evaluate computational complexity at intermediate nodes, $\beta_{R_j}(r)$, which is defined as the total of both encoding and decoding computational complexity.
As showed in the figure, the interesting point is that for $\epsilon=0.15$, computational complexity of NCF for the path of more than $2$ hops is kept constant for both high and low complexity constraints. This is indicated in Fig. \ref{fig:Reliability} that for such high erasure rate, the source has to choose the highest values of redundant combinations to maximize its utility. Therefore, computational complexity is the highest value corresponding to those constraints. 
On the other hand, for the lower erasure rate, $\epsilon=0.1$, $\beta_{0}^{high}=150000$ is large enough to obtain maximal value of utility function. For the increasing of path length, the source gradually adds more redundant combinations to cope with the accumulating residual erasure rate as we can observe in Fig. \ref{fig:Reliability}. The reliability is almost stable while the complexity is gradually increasing.
\subsubsection{Connectivity gain:}
\vspace{-6pt}
Assuming some reliability design target $\rho_{0}$, in the uncoded case, many nodes would not achieve the target $\rho_{0}$ while NC case could. Let $h^{NC}(\rho_{0})$ and $h^{noNC}(\rho_{0})$ denote the hop at which NC and the uncoded case can provide connectivity with the reliability satisfying $\rho_{0}$, respectively. For simplicity, the connectivity gain for a target $\rho_{0}$ is defined as $\gamma\left(\rho_{0}\right)=h^{NC}(\rho_{0})/h^{noNC}(\rho_{0})$.
\begin{figure*}
     \centering
     \subfloat[][$\rho_0=80\%$]			{\includegraphics[scale =0.50]{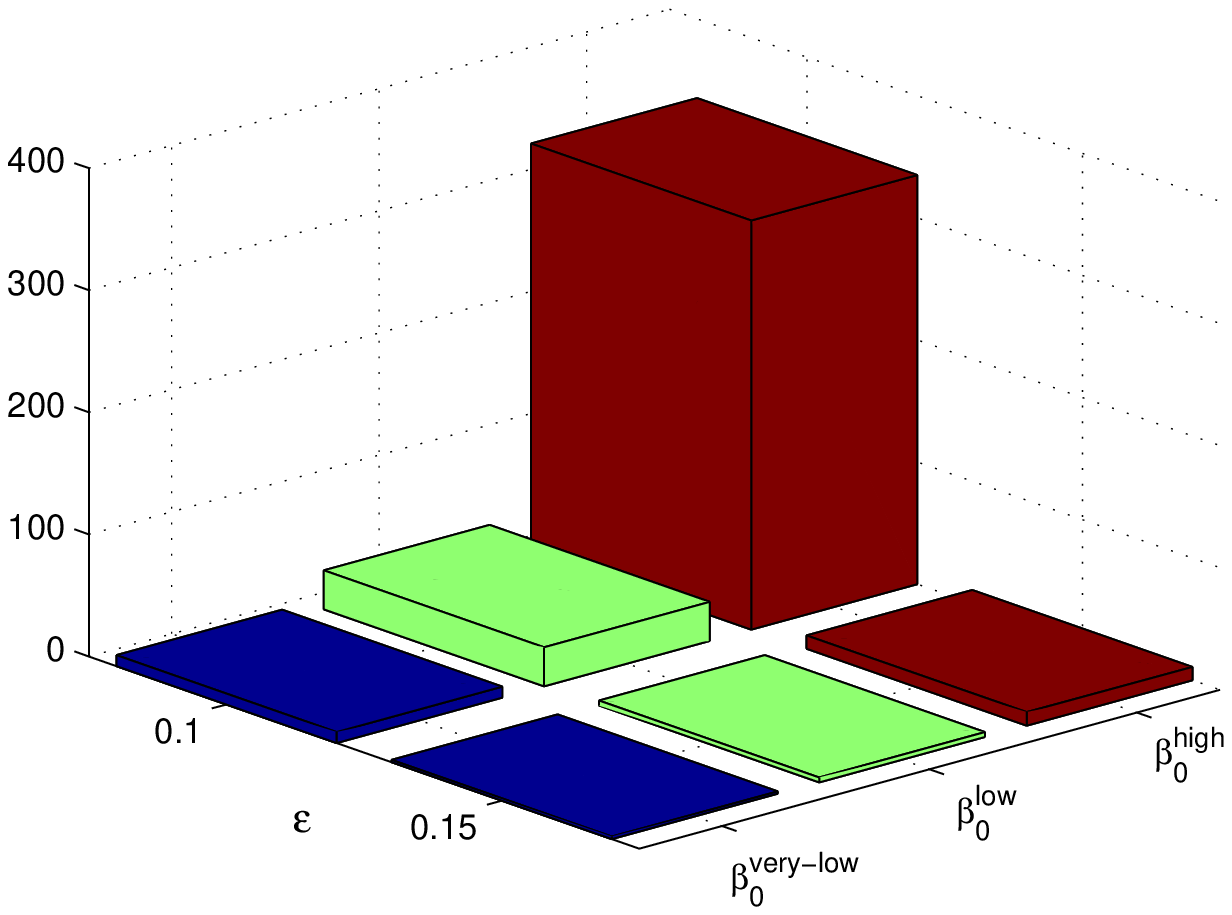}\label{fig:gain_PDR80}}
     \subfloat[][$\rho_0=85\%$]			{\includegraphics[scale =0.50]{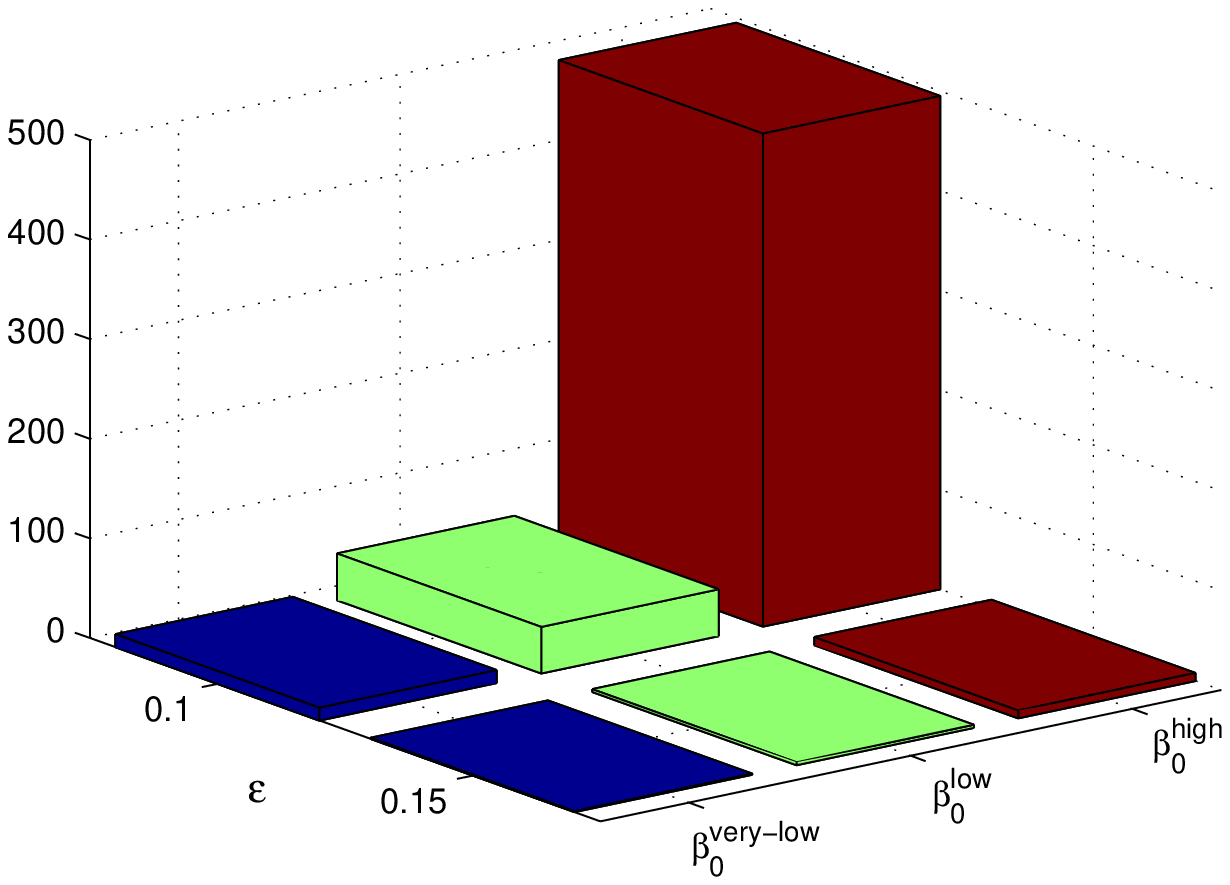}\label{fig:gain_PDR85}}
     \caption{Connectivity gain when using optimized NC in times for different $\rho_0$ and $\beta_0$.}
     \label{fig:CONNECTIVITY}
\end{figure*}

Fig. \ref{fig:CONNECTIVITY} depicts connectivity gain when using NCF in times for different $\rho_0$ with $\beta_0^{very-low}=100000$, $\beta_0^{low}=125000$, and $\beta_0^{high}=150000$. The larger the computational constraint, the higher the connectivity gain beyond the cell coverage of the satellite. Particularly, since $\epsilon=0.1$, NC can obtain up to $335$ times and $495$ times gain in connectivity if compared to the uncoded case with high constraint for $80\%$ and $85\%$ target reliability, respectively. Otherwise, in low constraint, the connectivity gain obtained with NC is $32$ times and $47$ times for $80\%$ and $85\%$ reliability, respectively. The reason is that the performance of the uncoded case is significantly impacted by per-link erasure rate and length of transmission path. NC case, meanwhile, can adapt its coding rate within the constraints to obtain the target reliability while optimizing source's utility. However, implementing NCF is then strongly affected by computational complexity which is equivalently mapped into the price to pay in terms of energy consumption.

\section{Conclusions}
\label{sec:Conc}
\vspace{-2pt}
In this paper, we have proposed the integration of NC and NFV architectural design. Particularly, NC functionalities have been identified as a toolbox so that NC can be designed as a virtual network function thus providing flow engineering functionalities to the network.
In addition, SNC and novel subspace coding scheme have been introduced as part of coding design domain to make it possible efficient-use of network codes for coherent and non-coherent design approaches, respectively.

We have conducted a complete design to illustrate the use and relevance of our proposed VGNCF design where geographical information is the key enabler to support VGNCF.
Specifically, we make it clear the interaction between our proposed VGNCF and NFV-MANO, the brain of NFV architecture, through a general procedure for the instantiation, performance monitoring, and termination of VGNCF.
Especially, optimization functionality ensures an optimized and energy-efficient operation of VGNCF. Several numerical results show the improvement of overall throughput (achievable rate) with NC at the source and re-encoding at the intermediate nodes if compared with transmission with NC at the source only. Furthermore, the service-driven optimized VGNCF can provide connectivity gain up to $32$ times and $47$ times in comparison with the uncoded case for $80\%$ and $85\%$ reliability with complexity constraint of $125000$ operations. 

Our proposed framework can naturally be tailored for different designs and accommodate additional functionalities. Moreover, the proposed architectural design framework and interaction between VGNCF and NFV-MANO functional blocks can also be generalized as a design framework for several different VNFs.
This is only the first step towards a full integration of VGNCF architecture design into ETSI NFV architecture. In the future, we plan to investigate implementation and deployment of VGNCF design for next generation networks, e.g. $5G$ networks.

\begin{center}
\textbf{APPENDICES}
\end{center}
\appendix

\section{Proof of results in Section \ref{sec:RateRegion}}
\label{app:RateRegion}
\subsection{End-to-end coding}
$\eta^{e2e}$ depends on end-to-end packet erasure rate given as $\eta^{e2e}=1-(1-\epsilon_1)(1-\epsilon_2)$. 
$\eta^{e2e}$  and $R^{e2e}$ thus directly depend on the product $(1-\epsilon_1)(1-\epsilon_2)$. 
Therefore, we will consider whether there exists a set of $(\epsilon_1,\epsilon_2)$ with the same value of the product $(1-\epsilon_1)(1-\epsilon_2)$ leading the same $R^{e2e}$. 

We evaluate the function $f(\epsilon_1,\epsilon_2) =(1-\epsilon_1)(1-\epsilon_2)$ to address the existence of such set of $(\epsilon_1,\epsilon_2)$ and how they affect the shapes of $R^{e2e}$ as depicted in Figures \ref{fig:Re2e_5_r_05}-\ref{fig:Re2e_5_r_03}.

We assume that there is a set of $(\epsilon_1,\epsilon_2)$ so that $f(\epsilon_1,\epsilon_2)$ is a given constant, e.g. $(1-\epsilon_1 )(1-\epsilon_2 )=A$, for some $(\epsilon_1,\epsilon_2)$ in a limited range e.g. $\epsilon_i\in[0,0.6]$  $(i=1,2)$. For example, for $(\epsilon_1=a,\epsilon_2=0)$, we find $f(\epsilon_1=a,\epsilon_2=0) = A$. Then, our concern is that whether there is a set of $(\epsilon_1,\epsilon_2)$ so that value of the product $f(\epsilon_1,\epsilon_2)$ is still equal to $A$. If $(1-\epsilon_1)(1-\epsilon_2)=A$ and an arbitrary $\epsilon_1$, it necessarily exists $\epsilon_2$ written by 
\begin{eqnarray}
\epsilon_2=f(\epsilon_1 )=\frac{\epsilon_1+A-1}{\epsilon_1-1}.
\label{eq:E8}
\end{eqnarray}

Note that the point $(\epsilon_1=a,\epsilon_2=0)$ satisfies Eq. \ref{eq:E8} and there are infinitely such points. Analyzing Equation \ref{eq:E8}, a form of reciprocal graphs, we obtain numerical results as denoted in Fig. \ref{fig:why_E2E_curve}.
\begin{figure}[ht]
\centering
\includegraphics[scale =0.6] {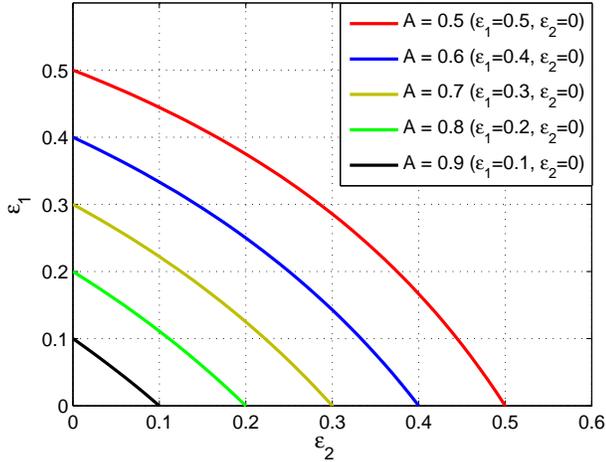}
\caption{Illustration of some values of $(\epsilon_1,\epsilon_2)$ that brings the same $R^{e2e}$.}
\label{fig:why_E2E_curve}
\end{figure}

Assume that $\epsilon_1\in[0,0.6]$, $\epsilon_2\in[0,0.6]$, we find some values of $A$ respectively to some points of $(\epsilon_1=a,\epsilon_2=0)$, e.g. $(\epsilon_1=0.5,\epsilon_2=0)$, $(\epsilon_1=0.4,\epsilon_2=0)$, $(\epsilon_1=0.3,\epsilon_2=0)$, $(\epsilon_1=0.2,\epsilon_2=0)$, $(\epsilon_1=0.1,\epsilon_2=0)$.
As depicted in Fig. \ref{fig:why_E2E_curve}, it is interesting to see that 

(1) There exists a set of $(\epsilon_1,\epsilon_2)$ for a specific constant $A$ respectively to a point $(\epsilon_1=a,\epsilon_2=0)$. 

(2) The set of $(\epsilon_1,\epsilon_2)$ with respect to $A$ draws a symmetric curve. In other words, this proves the existence of a set of $(\epsilon_1,\epsilon_2)$ producing a same value of $(\epsilon_1,\epsilon_2)$ and therefore, $R^{e2e}$ obtains a same value for the set of $(\epsilon_1,\epsilon_2)$.

(3) The above results also indicate the symmetry of the curves and the independence of the order of the links. Especially, it is clear to observe the shapes of regions at the frontier and at the boundary of achievable rate region as shown in Figures \ref{fig:Re2e_5_r_05}-\ref{fig:Re2e_5_r_03}. Furthermore, as for small values of $(\epsilon_1,\epsilon_2 )$, the curves look as lines within our limitation of $(\epsilon_1,\epsilon_2)$.

\subsection{NC at source and re-encoding at the intermediate node}
We assess the shapes of achievable rate regions based on characteristics of $\eta^{NC}=1-(1-\eta_1)(1-\eta_2)$ by means of numerical results. We rely on the behavior of $\eta_i$  $(i=1,2)$ to address the behavior of $\eta^{NC}$ and $R^{NC}$. 

Due to the independent of the order of the two links, the behavior of $\eta_1$ and $\eta_2$ is similar. We thus keep $\epsilon_1$  (link $1$) fixed while $\epsilon_2$ is chosen in a range. For the sake of simplicity, we first consider an arbitrary point $(\epsilon_1=a,\epsilon_2=0)$ (e.g. $a\in[0,0.6]$) located inside the achievable region of $R^{NC}$. Let $r_0$ be coding rate at that point so that $\eta^{NC}(\epsilon_1=a,\epsilon_2=0)\leq\eta_0$.
Because of $\epsilon_2=0$ for any $r_0$, hence $\eta_2=0$ and $\eta^{NC}=\eta_1$ which reduce the complexity of identifying $r_0$.
Now our main target is to show whether or not  $\eta^{NC}(\epsilon_1=a,\epsilon_2<a) \approx \eta^{NC} (\epsilon_1=a,\epsilon_2=0)$ at $r_0$ identified at previous step. It means that for a given $\epsilon_1=a$ and the respective $r_0$, $\eta^{NC}$ is now approximately stable over a range of $\epsilon_2\in[0,a)$. Equivalently, whether $R^{NC}$ is approximately following the shape of square form or not as depicted in Figures \ref{fig:Rnc_5_r_05}-\ref{fig:Rnc_5_r_03}.
By means of simulation, we show how $\eta^{NC}(\epsilon_1=a,\epsilon_2<a)$ behaves for a range of $\epsilon_2$ with the same $r_0$. Because $\eta_1(\epsilon_1=a)$ is a constant for $r_0$, $\eta^{NC}(\epsilon_1=a,\epsilon_2<a)$ is thus only affected by $\eta_2(\epsilon_2<a)$. For the increase of $\epsilon_2$, $r_0$ is still kept if $\eta^{NC}(\epsilon_1=a,\epsilon_2<a)\leq\eta_0$.

In Figures \ref{fig:why_NC_square_e01} and \ref{fig:why_NC_square_e03}, we indicate the behavior of $\eta_2$, $\eta^{NC}$, and $R^{NC}$ with $\eta_0=5\%$, $r\in[0.3,1]$, and $\epsilon_1=0.1, 0.3$, respectively. We identify optimal coding rate $r_0=0.88,0.62$, respectively for each case of $(\epsilon_1=0.1,\epsilon_2=0)$, $(\epsilon_1=0.3,\epsilon_2=0)$ so that the target $\eta_{NC}\leq\eta_0$ is satisfied. 
In particular, Fig. \ref{fig:why_NC_square_e01} shows the behavior of $\eta_2(.)$ according to $\epsilon_1=0.1$, $\epsilon_2\in[0,0.12]$ and $r_0=0.88$. It is noted that $\eta^{NC}(\epsilon_1=0.1,\epsilon_2<0.08)\leq\eta_0$. This is because of that $\eta_2 (\epsilon_2<0.08) \ll \eta_2 (\epsilon_2=0.1)$, e.g. $\eta_2(\epsilon_2=0.07)=0.004 \ll \eta_2 (\epsilon_2=0.1)=0.038$. Therefore,  $1-\eta_2 (\epsilon_2\leq0.07) \approx 1$. Hence, $\eta^{NC}(\epsilon_1=0.1,\epsilon_2\leq0.07) \approx \eta^{NC}(\epsilon_1=0.1,\epsilon_2=0)$. This estimation explains why $R^{NC}$ represented by the green line is almost stable during $\epsilon_2\in[0,0.07]$. 
In other words, the residual erasure rate of NC, $\eta^{NC}(\epsilon_1,\epsilon_2<\epsilon_1)$ is almost decided by $\epsilon_1$, i.e. the impact of $\epsilon_1$ is dominant that of $\epsilon_2$.
This explains the shape of $R^{NC}$, which follows the square form.
However, for $\epsilon_2\in(0.07,0.1]$ approximately $30\%$ of $\epsilon_i=0.1$, $\eta_2$ is comparable with $\eta_1$, the coding rate thus should be reduced to keep $\eta^{NC}$ low as the requirement. $R^{NC}$ is thus changed significantly. Hence, we can infer that $R^{NC}$ follows the square form but imperfect form for some percent of $\epsilon_1$ and $\epsilon_2$. For instance, in this case, this range is approximately $30\%$ of $\epsilon_i=0.1$ $(i=1,2)$.
\begin{figure*}[ht]
\centering
\includegraphics[scale =0.32] {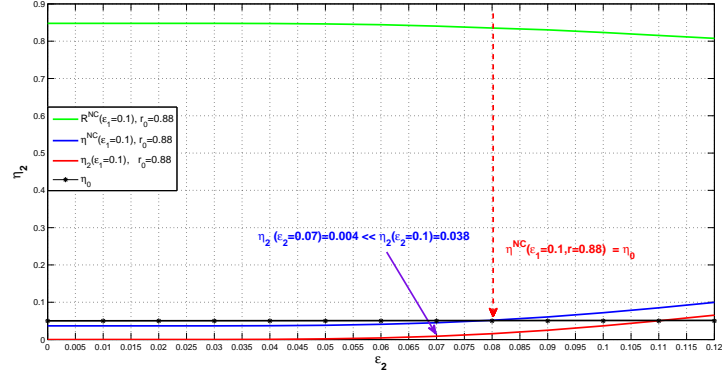}
\caption{Illustration of $\eta_2$, $\eta^{NC}$, and $R^{NC}$ with $\epsilon_1=0.1$ and $\eta_0=5\%$, $r\in[0.3,1]$, $\eta^{NC}(\epsilon_1=0.1,\epsilon_2=0)\leq\eta_0$ at $r_0=0.88$.}
\label{fig:why_NC_square_e01}
\end{figure*}
\begin{figure*}[ht]
\centering
\includegraphics[scale =0.32] {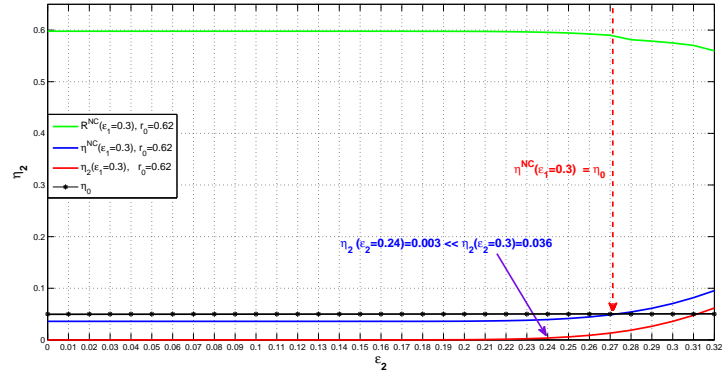}
\caption{Illustration of $\eta_2$, $\eta^{NC}$, and $R^{NC}$ with $\epsilon_1=0.3$ with $\eta_0=5\%$, $r\in[0.3,1]$, $\eta^{NC}(\epsilon_1=0.3,\epsilon_2=0)\leq\eta_0$ at $r_0=0.62$.}
\label{fig:why_NC_square_e03}
\end{figure*}

For another example, Fig. \ref{fig:why_NC_square_e03} depicts the behavior of $\eta_2(.)$ according to $\epsilon_1=0.2$, $\epsilon_2 \in [0,0.22]$ and $r_0=0.74$. It is clear to see that $\eta^{NC}(\epsilon_1=0.2,\epsilon_2 \leq 0.18) \leq \eta_0$ and $\eta_2(\epsilon_2<0.16) \ll \eta_2(\epsilon_2=0.2)$, e.g. $\eta_2(\epsilon_2=0.15)=0.003 \ll \eta_2(\epsilon_2=0.2)=0.033$. Therefore, $1-\eta_2(\epsilon_2 \leq 0.15) \approx 1$. Hence, $\eta^{NC}(\epsilon_1=0.2,\epsilon_2 \leq 0.15) \approx \eta^{NC}(\epsilon_1=0.2,\epsilon_2=0)$. Herein, $R^{NC}$ is stable for a range of $\epsilon_2\in[0,0.15]$, which is illustrated by the green line. Therefore, the shape of $R^{NC}$ follows the square form for a range of $(\epsilon_1\in[0,0.15], \epsilon_2\in[0,0.15])$. The remaining $25\%$ of $\epsilon_i=0.2$ $(i=1,2)$ does not follow the square form.

In general, we can conclude that for a limitation of coding rate, the larger the $\epsilon_i$  $(i=1,2)$, the better the shapes of $R^{NC}$ following the rectangle form (a more general form assuming that $\epsilon_1\neq\epsilon_2$) that we can see in the achievable rate region, especially at the boundary of the region. 
This reflects exactly the shapes of $R^{NC}$ as depicted in Figures \ref{fig:Rnc_5_r_05}-\ref{fig:Rnc_5_r_03}, which are in terms of squares although it is not really perfect at the corner. Furthermore, due to the independence of the order of the links, the shape of $R^{NC}$ is symmetric over the diagonal of the two axes.

\bibliographystyle{IEEETran}

\bibliography{satcom}

\end{document}